\providecommand{\algorithmname}{Algorithm}
\providecommand{\algorithmname}{Algorithm}
\newtheorem{theorem}{Theorem}
\newtheorem{lemma}[theorem]{Lemma}
\newtheorem{definition}[theorem]{Definition}
\newtheorem{remark}[theorem]{Remark}
\newcounter{algo}
\renewcommand{\Re}{\mathbb{R}}
\newcommand{\trt}{^{\scriptscriptstyle T}}
\newcommand{\bx}{{\mathbf{x}}}
\newcommand{\by}{{\mathbf{y}}}
\newcommand{\bz}{{\mathbf{z}}}
\newcommand{\argmin}{\mathop{\rm argmin}}
\begin{document}

\title{\vspace{-0.2cm} Parallel and  Distributed Methods for   Nonconvex 
Optimization$-$Part I: Theory}

\author{Gesualdo Scutari, Francisco Facchinei, Lorenzo Lampariello, and Peiran
Song\vspace{-0.7cm}%
  \thanks{Scutari is with the School of Industrial Engineering and the Cyber Center (Discovery Park), Purdue University, West-Lafayette, IN, USA; email: \texttt{gscutari@purdue.edu}. Facchinei and Lampariello are with the Dept. of Computer, Control, and Management Engineering, University of Rome ``La Sapienza'',  Rome, Italy; emails: \texttt{<facchinei, lampariello>@diag.uniroma1.it}. Song is with  the School of Information and Communication Engineering, Beijing Information Science and Technology University, Beijing,  China; email: \texttt{peiransong@bistu.edu.cn}.  \newline The work of  Scutari was supported by the USA National Science Foundation under Grants CMS 1218717, CCF 1564044, and CAREER Award 1254739. The work of Facchinei was partially supported by the MIUR project PLATINO, Grant  n. PON01$\_$01007.  Part of this work appeared in \cite{Scutari_ICASSP14,ScuFaccLampSonICASSP16} and on arxiv \cite{FacchineiLamparielloScutariTSP14-PartI} on Oct. 2014.}}

\maketitle
\begin{abstract}
In this two-part paper, we propose a general algorithmic framework
for the minimization of a nonconvex smooth function subject to \emph{nonconvex}
smooth constraints. The algorithm solves a sequence of \emph{(separable)
strongly convex} problems and mantains feasibility at each iteration. Convergence to a stationary solution of
the original nonconvex optimization is established. Our framework
is very general and flexible and  unifies several existing Successive
Convex Approximation (SCA)-based algorithms 
More importantly, and differently from current SCA approaches,
it naturally leads to \emph{distributed and parallelizable} implementations
for a large class of nonconvex problems.

This Part I is devoted to the description of the framework in its
generality. In Part II we customize our general methods to several
multi-agent optimization problems, mainly in communications and networking;
the result is a new class of centralized and \emph{distributed} algorithms that compare
favorably to existing ad-hoc (centralized) schemes.
\vspace{-0.3cm}

\end{abstract}

\IEEEpeerreviewmaketitle

\section{Introduction\label{sec:intro} }

\IEEEPARstart{T}{he} minimization of a nonconvex (smooth) objective
function $U:\,\mathcal{K}\to\mathbb{R}$ subject to  convex constraints
$\mathcal{K}$ and nonconvex ones $g_{j}(\bx)\leq0$, with $g_{j}:\,\mathcal{K}\to\mathbb{R}$
smooth,
\vskip-0.1cm
\begin{equation}
\begin{array}{cl}
\underset{\bx}{\textnormal{min}} & \,\, U(\bx)\smallskip\\
\mbox{s.t.} & \vspace{-0.4cm}\\
 & \hspace{-0.1cm}\left.\begin{array}{l}
g_{j}(\bx)\le0,\; j=\,1,\ldots,m\\[5pt]
\bx\in\mathcal{K}
\end{array}\right\} \triangleq\mathcal{X},
\end{array}\tag{$\mathcal P$}\label{eq:nonnonU}
\end{equation}
\vskip-0.1cm

\noindent
is an ubiquitous problem that arises in many fields, ranging from
signal processing to communication, networking, machine learning,
etc.

It is hardly possible here to even summarize the huge amount of solution
methods that have been proposed for problem \eqref{eq:nonnonU}. Our
focus in this paper is on \emph{distributed} algorithms converging
to stationary solutions of \ref{eq:nonnonU} while \emph{ preserving
the feasibility of the iterates}. While the former feature needs no
further comment, the latter is motivated by several reasons. First,
in many cases the objective function $U$ is not even defined outside
the feasible set; second, in some applications one may have to interrupt
calculations before a solution has been reached and it is then important
that the current iterate is feasible; and third, in on-line implementations
it is mandatory that some constraints are satisfied by every iterate
(e.g., think of power budget or interference constraints). As far
as we are aware of, there exists no method for the solution of \ref{eq:nonnonU}
in its full generality that is both feasible {\em and} distributed.

Existing efforts pursuing the above design criteria include: 1) Feasible Interior Point (FIP) methods (e.g., 
\cite{byrd2003feasible,FiaMcC68}), 2) Feasible Sequential Quadratic
Programming (FSQP) methods (e.g., \cite{LawTit01}); 3) Parallel Variable
Distribution (PVD) schemes (e.g., \cite{FerrMan94,SagSol02,Sol98});
 4) SCA algorithms (in the spirit of \cite{MarWrig78,BeckBenTTetr10,RazaviyaynHongLuo_SIAMOPT13,ScutFacchSonPalPang13,AlvScutPang13,QuDie11}); and
some specialized algorithms with roots in the structural optimization field (e.g., \cite{fleury1989conlin,svanberg1987method,svanberg2002class}).
FIP and FSQP methods  maintain feasibility throughout the
iterations but are centralized and computationally expensive. PVD
schemes are suitable for implementation over parallel architectures
but they require an amount of information exchange/knowledge that
is often not compatible with a distributed architecture (for example they
cannot be applied to the case study discussed in Part II of the paper
\cite{FacchineiLamparielloScutariTSP14-PartII}). Furthermore, when
applied to problem \ref{eq:nonnonU}, they call for the solution
of possibly difficult nonconvex (smaller) subproblems; and convergence
has been established only for convex \cite{FerrMan94,Sol98} or nonconvex
but block separable $g_{j}$s \cite{SagSol02}. Standard
SCA methods are centralized \cite{MarWrig78,BeckBenTTetr10,QuDie11},
with the exception of \cite{ScutFacchSonPalPang13,AlvScutPang13}
and some instances of \cite{RazaviyaynHongLuo_SIAMOPT13} that lead
instead to distributed schemes. However, convergence conditions have
been established only in the case of \emph{strongly convex} \emph{$U$}
\cite{BeckBenTTetr10} or \emph{convex} and separable $g_{j}$s \cite{RazaviyaynHongLuo_SIAMOPT13,ScutFacchSonPalPang13,AlvScutPang13}.
Finally, methods developed in the structural engineering  field, including  \cite{fleury1989conlin,svanberg1987method,svanberg2002class},  share some similarities with our approach, but in most cases
they lack reliable mathematical foundations or do not prove convergence to stationary points of the original problem \ref{eq:nonnonU}.
We refer to Sec. \ref{sub:Related-works} for a more detailed discussion
on existing works.

In this paper we propose a new  framework for
the general formulation \ref{eq:nonnonU} which, on one hand, maintains
feasibility and, on the other hand, leads, under very mild additional
assumptions, to parallel and distributed solution methods. The essential,
natural idea underlying the proposed approach is to compute a solution of
\ref{eq:nonnonU} by solving  a sequence of (simpler) strongly convex subproblems
whereby the nonconvex objective function and constraints are replaced
by suitable convex approximations; the subproblems can be then solved
(under some mild assumptions) in a distributed fashion using standard
primal/dual decomposition techniques (e.g., \cite{Palomar-Chiang_ACTran07-Num,Bertsekas_Book-Parallel-Comp}).
Additional key features of the proposed method are: i) it includes
as special cases several classical SCA-based algorithms, such as (proximal)
gradient or Newton type methods, block coordinate (parallel) descent
schemes, Difference of Convex (DC) functions approaches, convex-concave
approximation methods; ii) our convergence conditions unify and extend
to the general class \ref{eq:nonnonU} those of current (centralized)
SCA methods; iii) it offers much flexibility in the choice of
the convex approximation functions: for instance, as a major departure
from current SCA-based methods {[}applicable to special cases of \ref{eq:nonnonU}{]}
\cite{MarWrig78,RazaviyaynHongLuo_SIAMOPT13} and DC programs \cite{QuDie11},
the proposed approximation of the objective function $U$ need not
be a tight \emph{global upper bound }of $U$, a fact that significantly
enlarges the range of applicability of our framework;
and iv) by allowing alternative choices for the convex approximants,
it encompasses a gamut of novel algorithms, offering great flexibility
to control iteration complexity, communication overhead and convergence
speed, and all converging under the \emph{same} conditions. Quite
interestingly, the proposed scheme leads to new efficient
algorithms even when customized to solve well-researched problems, including power
control problems in cellular systems \cite{DC-BranchAndBound-1,DC-BranchAndBound-5,DC-Linearization-1,ChiTanPalONJul07},
MIMO relay optimization \cite{DC-Polynomial}, dynamic spectrum management
in DSL systems \cite{DC-BranchAndBound-2,TsiaflakisMoonenTSP08},
sum-rate maximization, proportional-fairness and max-min optimization
of SISO/MISO/MIMO ad-hoc networks \cite{ScutFacchSonPalPang13,SchmidtShiBerryHonigUtschick-SPMag, MochaourabCaoJorswieck_RateProfile_arxiv13,QiuZhangLuoCui_maxminTSP11},
robust optimization of CR networks \cite{KimGiannakisIT11,DallAnese2012,YangScutariPalomar_JSAC13},
transmit beamforming design for multiple co-channel multicast groups
\cite{Sidiropoulos:2006je,Karipidis:2008bn}, and cross-layer design
of wireless networks \cite{ConvexSumSeparable-2,Hande:2008ty}.  
Part II of the paper \cite{FacchineiLamparielloScutariTSP14-PartII} is devoted to the application of the proposed algorithmic framework to some of the aforementioned problems (and their generalizations).
Numerical results show that our schemes compare favorably to existing
ad-hoc ones (when they exist).

The rest of this two-part paper is organized as follows. Sec. \ref{sec:Technical-preliminaries}
introduces the main assumptions underlying the study of the optimization
problem \ref{eq:nonnonU} and provides an informal description of
our new algorithms. Sec. \ref{sec:Algorithmic-framework} presents
our novel framework based on SCA, whereas Sec. \ref{sec:Distributed-implementation}
focuses on its distributed implementation in the primal and dual domain.
Finally, Sec.\ref{sec:Conclusions} draws some conclusions. 
In Part II of
the paper \cite{FacchineiLamparielloScutariTSP14-PartII} we apply
our algorithmic framework to several resource allocation problems
in wireless networks and provide extensive numerical results showing
that the proposed algorithms compare favorably to state-of-the-art
schemes.\vspace{-0.3cm}

\section{Technical preliminaries and main idea\label{sec:Technical-preliminaries}}

In this section we introduce the main assumptions underlying the
study of the optimization problem \ref{eq:nonnonU} along with some
technical results that  will be instrumental to describe our approach.
We also provide an informal description of our new algorithms
that sheds light on the core idea of the proposed  decomposition
technique. The formal description of the framework is given in Sec.
\ref{sec:Algorithmic-framework}.

Consider problem \ref{eq:nonnonU}, whose feasible set is denoted
by $\mathcal{X}$.\smallskip{}

\noindent \textbf{Assumption 1.} We make the blanket assumptions:

\noindent A1) $\mathcal{K}\subseteq\mathbb{R}^{n}$ is closed and
convex (and nonempty);

\noindent A2) $U$ and each $g_{j}$ are continuously differentiable
on $\mathcal{K}$;

\noindent A3) $\nabla_{\bx}U$ is Lipschitz continuous on $\mathcal{K}$
with constant $L_{\nabla U}$.

\noindent A4) $U$ is coercive on $\mathcal{K}$.\smallskip{}

The assumptions above are quite standard and are satisfied by a large
class of problems of practical interest. In particular,
A4 guarantees that the problem has a solution, even when the
feasible set $\mathcal{X}$ is not bounded. Note that we do not assume
convexity of $U$ and $g_{1},\ldots,g_{m}$; without loss of generality, convex constraints,
if present, are accommodated in the set $\mathcal{K}$.

Our goal is to efficiently compute locally optimal solutions of  \ref{eq:nonnonU},
possibly in a distributed way, while preserving the feasibility of
the iterates. Building on the idea of SCA methods, our approach
consists in solving a sequence of \emph{strongly convex inner} approximations
of \ref{eq:nonnonU} in the form: given   $\bx^{\nu}\in\mathcal{X}$
\begin{equation}
 \begin{array}{cl}
\underset{\bx}{\textnormal{min}} & \,\,\tilde{U}(\bx;\bx^{\nu})\medskip\\
\mbox{s.t.} & \vspace{-0.4cm}\\
 & \hspace{-0.1cm}\left.\begin{array}{l}
\tilde{g}_{j}(\bx;\bx^{\nu})\le0,\; j=\,1,\ldots,m\\[5pt]
\bx\in\mathcal{K}
\end{array}\right\} \triangleq\mathcal{X}(\bx^{\nu}),
\end{array}\tag{$\mathcal P_{\bx^\nu}$}
\label{eq:k2}
\end{equation}
where $\tilde{U}(\bx;\bx^{\nu})$ and $\tilde{g}_{j}(\bx;\bx^{\nu})$
represent approximations of $U(\bx)$ and $g_{j}(\bx)$ at the current
iterate $\bx^{\nu}$, respectively, and $\mathcal{X}(\bx^{\nu})$
denotes the feasible set of \ref{eq:k2}.

We introduce next a number of assumptions that will be used throughout
the paper.

\noindent \textbf{Assumption 2 (On $\tilde{U}$).} Let $\tilde{U}:\mathcal{K}\times\mathcal{X}\rightarrow\mathbb{R}$
be a function continuously differentiable with respect to the first argument and such that:

\noindent B1)$\,\tilde{U}(\bullet;\by)$ is uniformly strongly convex
on $\mathcal{K}$ with constant $c_{\tilde{U}}\!\!>\!0$, i.e. $\forall \bx, \bz \in  \mathcal{K}, \; \forall \by \in \mathcal{X}$
$$
(\bx - \bz)^T \left(\nabla_{\bx}\tilde{U}(\bx;\by)- \nabla_{\bx}\tilde{U}(\bz;\by)\right)\geq
c_{\tilde{U}} \|\bx -\bz\|^2;
$$

\noindent B2) $\nabla_{\bx}\tilde{U}(\by;\by)=\nabla_{\bx}U(\by),$
for all $\by\in\mathcal{X}$;

\noindent B3) $\nabla_{\bx}\tilde{U}(\bullet;\bullet)$ is continuous
on $\mathcal{K}\times\mathcal{X}$;

\noindent where $\nabla_{\bx}\tilde{U}(\mathbf{u};\mathbf{w})$ denotes
the partial gradient of $\tilde{U}$ with respect to the first argument
evaluated at $(\mathbf{u};\mathbf{w})$.\smallskip{}

\noindent \textbf{Assumption 3 (On $\tilde{g}_{j}$s).} Let each
$\tilde{g}_{j}:\,\mathcal{K}\times\mathcal{X}\,\to\,\mathbb{R}$ satisfy
the following:

\noindent C1) $\tilde{g}_{j}(\bullet;\by)$ is convex on $\mathcal{K}$
for all $\by\in\mathcal{X}$;

\noindent C2) $\tilde{g}_{j}(\by;\by)=g_{j}(\by)$, for all $\by\in\mathcal{X}$;

\noindent C3) $g_{j}(\bx)\le\tilde{g}_{j}(\bx;\by)$ for all $\bx\in\mathcal{K}$
and $\by\in\mathcal{X}$;

\noindent C4) $\tilde{g}_{j}(\bullet;\bullet)$ is continuous on $\mathcal{K}\,\times\,\mathcal{X}$;

\noindent C5) $\nabla_{\bx}g_{j}(\by)=\nabla_{\bx}\tilde{g}_{j}(\by;\by)$,
for all $\by\in\mathcal{X}$;

\noindent C6) $\nabla_{\bx}\tilde{g}_{j}(\bullet;\bullet)$ is continuous
on $\mathcal{K}\times\,\mathcal{X}$;

\noindent where $\nabla_{\bx}\tilde{g}_{j}(\by;\by)$ denotes the
(partial) gradient of $\tilde{g}_{j}$ with respect to the first argument
evaluated at $\by$ (the second argument is kept fixed at $\by$).\smallskip

For some results we need stronger continuity properties of the (gradient
of the) approximation functions.

\noindent \textbf{Assumption 4}

\noindent B4) $\nabla_{\bx}\tilde{U}(\mathbf{x};\bullet)$ is uniformly
Lipschitz continuous on $\mathcal{X}$ with constant $\tilde{L}_{\nabla,2}$;

\noindent B5) $\nabla_{\bx}\tilde{U}(\bullet;\mathbf{y})$ is uniformly
Lipschitz continuous on $\mathcal{K}$ with constant $\tilde{L}_{\nabla,1}$;

\noindent C7) Each $\tilde{g}_{j}(\bullet;\bullet)$ is Lipschitz
continuous on $\mathcal{K}\times\mathcal{X}$.\smallskip{}

The key assumptions are B1, C1, and C3: B1 and C1 make  \ref{eq:k2} 
strongly convex, whereas C3 guarantees $\mathcal{X}(\bx^{\nu})\subseteq\mathcal{X}$
(iterate feasibility). The others are technical conditions (easy to
be satisfied in practice) ensuring that the approximations have
the same local first order behavior of the original functions. 
In the next section we provide some examples of approximate functions
that automatically satisfy Assumptions 2-4. As a final remark,
we point out that Assumptions 1-3 are in many ways similar \emph{but
generally weaker} than those used in the literature in order to solve special
cases of problem \ref{eq:nonnonU} \cite{MarWrig78,BeckBenTTetr10,RazaviyaynHongLuo_SIAMOPT13,ScutFacchSonPalPang13,AlvScutPang13}.
For instance, \cite{RazaviyaynHongLuo_SIAMOPT13,ScutFacchSonPalPang13,AlvScutPang13}
studied the simpler case of convex constraints; moreover, \cite{RazaviyaynHongLuo_SIAMOPT13}
requires the convex approximation $\tilde{U}(\bullet;\bx^{\nu})$
to be a \emph{global upper bound} of the nonconvex objective function
$U(\bullet)$, while we do not. The upper bound condition
C3 is assumed also in \cite{MarWrig78,BeckBenTTetr10} but, differently
from those works, we are able to handle also nonconvex objective functions
(rather than only strongly convex ones). Our weaker conditions on
the approximations $\tilde{U}$ and $\tilde{g}$ along with a more
general setting allow us to deal with a much larger class of problems
than \cite{RazaviyaynHongLuo_SIAMOPT13,AlvScutPang13,ScutFacchSonPalPang13,MarWrig78,BeckBenTTetr10};
see Part II of the paper \cite{FacchineiLamparielloScutariTSP14-PartII}
for specific examples. \vspace{-0.2cm}

\subsection{\noindent Regularity conditions\label{sub:Regularity-conditions}}

We conclude this section mentioning certain standard regularity
conditions on the stationary points of constrained optimization problems.
These conditions are needed in the study of the convergence properties
of our method. \vspace{-0.2cm}

\noindent \begin{definition}[Regularity] A point $\bar{\bx}\in\mathcal{X}$
 is called \emph{regular} for \ref{eq:nonnonU}
if the Mangasarian-Fromovitz Constraint Qualification (MFCQ) holds
at $\bar{\bx}$, that is (see e.g. \cite[Theorem 6.14]{RockWets98}) if the following
implication is satisfied:
\begin{equation}
\left.\begin{array}{c}
\mathbf{0}\in\sum_{j\in\bar{J}}\mu_{j}\nabla_{\mathbf{x}}g_{j}(\bar{\mathbf{x}})+N_{\mathcal{K}}(\bar{\mathbf{x}})\\[5pt]
\mu_{j}\ge0,\;\forall j\,\in\,\bar{J}
\end{array} \hspace{-5pt} \right\} \Rightarrow\mu_{j}=0,\,\forall j\,\in\,\bar{J},\label{eq:MFCQ-2}
\end{equation}
where $N_{\mathcal{K}}(\bar{\mathbf{x}})\triangleq\{\mathbf{d}\in\mathcal{K}\,:\,\mathbf{d}^{T}(\mathbf{y}-\mathbf{\bar{\mathbf{x}}})\leq0,\,\,\forall\mathbf{y}\in\mathcal{K}\}$
is the normal cone to $\mathcal{K}$ at $\bar{\mathbf{x}}$, and $\bar{J}\triangleq\{j\in\{1,\ldots,m\}:\, g_{j}(\bar{\mathbf{x}})=0\}$
is the index set  of those (nonconvex) constraints that are active
at $\bar{\mathbf{x}}$.

A similar definition holds for problem
 \ref{eq:k2}: a point $\bar{\bx}\in\mathcal{X}(\bx^\nu)$  is called \emph{regular}
for  \ref{eq:k2} if
\begin{equation}
\left.\begin{array}{c}
\mathbf{0}\in\sum_{j\in\bar{J}}\mu_{j}\nabla_{\mathbf{x}}\tilde g_{j}(\bar{\mathbf{x}}; \bx^\nu)+N_{\mathcal{K}}(\bar{\mathbf{x}})\\[5pt]
\mu_{j}\ge0,\;\forall j\,\in\,\bar{J}^\nu
\end{array} \hspace{-5pt} \right\} \Rightarrow\mu_{j}=0,\,\forall j\,\in\,\bar{J}^\nu\hspace{-2pt},\label{eq:MFCQ-2}
\end{equation}
where  $\bar{J}^\nu\triangleq\{j\in\{1,\ldots,m\}:\, \tilde g_{j}(\bar{\mathbf{x}}; \bx^\nu)=0\}$.
 \hfill $\square$\end{definition}\smallskip{}

\indent We point out that the regularity of $\bar{\bx}$ is implied by
 stronger but easier to be checked CQs, such as the
Linear Independence CQ, see \cite[Sec. 3.2]{FacchPang06} for more
details. Note that if the feasible set is convex, as it is in  \ref{eq:k2},
the MFCQ is equivalent to the Slater's CQ; for a set like
$\mathcal{X}(\mathbf{x}^{\nu})$, Slater's CQ reads
\[
\text{ri}(\mathcal{K})\cap \mathcal{X}_{g}^{<}(\mathbf{x}^{\nu})\neq\emptyset,
\]
where $\mathcal{X}_{g}^{<}(\mathbf{x}^{\nu})\triangleq\{\mathbf{x}\in\mathcal{K}:\,\tilde{g}_{j}(\mathbf{x};\mathbf{\mathbf{x}^{\nu}})<0,\, j=\,1,\ldots,m\}$ and $\text{ri}(\mathcal{K})$ is the relative interior of $\mathcal{K}$
(see, e.g., \cite[Sec. 1.4]{BertsekasNedicOzdaglar_book_convex03}). In particular, this means that for problem  \ref{eq:k2} either the MFCQ holds at all the feasible points or it does not hold at any point.
Furthermore, because of C2 and C5, a point $\bar{\bx}$ is regular for \ref{eq:nonnonU} if and only if $\bar{\bx}$ is regular for $\mathcal P_{\bar{\bx}}$  (and, therefore, if  any feasible point of $\mathcal P_{\bar{\bx}}$ is regular).

We recall that $\bar\bx$ is a stationary point of problem \ref{eq:nonnonU}, if
$$
\begin{array}{c}
\mathbf{0}\in \nabla_{\mathbf{x}} U(\bar\bx) + \sum_{j\in\bar{J}}\mu_{j}\nabla_{\mathbf{x}}g_{j}(\bar{\mathbf{x}})+N_{\mathcal{K}}(\bar{\mathbf{x}})\\[5pt]
\mu_{j}\ge0,\;\forall j\,\in\,\bar{J}
\end{array}
$$
for some suitable Lagrange multipliers $\mu_j$s. It is well-known that a regular (local) minimum point of problem \ref{eq:nonnonU} is also stationary. Finding stationary points is actually the classical goal of solution algorithms for nonconvex problems.

In order to simplify the presentation, in  the rest of this paper we  assume the following regularity condition.
\smallskip

\noindent \textbf{Assumption 5}
All  feasible points of problem \ref{eq:nonnonU} are regular.\smallskip

 \noindent One could relax  this assumption and require regularity only at specific points, but at the cost of more convoluted statements; we leave this task to the reader. We remark,  once again, that Assumption 5 implies that
 any feasible point of $\mathcal P_{\bar{\bx}}$ is regular.

\vspace{-0.2cm}

\section{Algorithmic framework\label{sec:Algorithmic-framework}}

We are now ready to formally introduce the proposed solution method for \ref{eq:nonnonU}.
Note first that,
because of B1 and C1, each subproblem  \ref{eq:k2} is strongly
convex and thus has a unique solution, which is denoted by $\hat{\bx}(\bx^{\nu})$
(a function of $\bx^{\nu}$):
\begin{equation}
\hat{\bx}(\bx^{\nu})\triangleq\argmin_{\bx\in\mathcal{X}(\bx^{\nu})}\tilde{U}(\bx;\bx^{\nu}).\label{eq:best-response}
\end{equation}

The proposed convex approximation  method consists in solving iteratively the optimization
problems (\ref{eq:best-response}), possibly including a step-size
in the iterates; we named it  iNner cOnVex Approximation (NOVA) algorithm. The formal description of the NOVA algorithm along with
its convergence properties are given in Algorithm \ref{algoC} and
Theorem \ref{th:conver}, respectively.

\begin{algorithm}[H]
\textbf{Data}: $\gamma^{\nu}\in(0,1]$, $\bx^{0}\,\in\,\mathcal{X}$;
set $\nu=0$.

(\texttt{S.1}) If $\bx^{\nu}$ is a stationary solution of \ref{eq:nonnonU}:
\texttt{STOP}.

(\texttt{S.2}) Compute
$\hat{\bx}(\bx^{\nu}),$
the solution  of  \ref{eq:k2} {[}cf. \eqref{eq:best-response}{]}.

(\texttt{S.3}) Set $\bx^{\nu+1}=\bx^{\nu}+\gamma^{\nu}(\hat{\bx}(\bx^{\nu})-\bx^{\nu})$.

(\texttt{S.4}) $\nu\leftarrow\nu+1$ and go to step (\texttt{S.1}).

\protect\caption{\hspace{-3pt}\textbf{: }\label{algoC} NOVA Algorithm for \ref{eq:nonnonU}. }
\end{algorithm}
\begin{theorem}\label{th:conver} Given the nonconvex problem \ref{eq:nonnonU}
under Assumptions 1-3 and 5, let $\{\bx^{\nu}\}$ be the sequence generated
by Algorithm 1. The following hold.

\noindent (a) $\mathbf{x}^{\nu}\in\mathcal{X}(\mathbf{x}^{\nu})\subseteq\mathcal{X}$
for all $\nu\geq 0$ (iterate feasibility);

\noindent (b) If the step-size $\gamma^{\nu}$ and $c_{\tilde{U}}$
are chosen so that
\begin{equation}
0<\inf_{\nu}\gamma^{\nu}\le\sup_{\nu}\gamma^{\nu}\le\gamma^{\max}\le1\quad\mbox{and}\quad2c_{\tilde{U}}>\gamma^{\max}L_{\nabla},\label{eq:constant_step_size}
\end{equation}
 then $\{\bx^{\nu}\}$ is bounded and
each of its limit points is a stationary point of problem \ref{eq:nonnonU}.

\noindent (c) If the step-size $\gamma^{\nu}$ is chosen so that
\begin{equation}
\gamma^{\nu}\in(0,1],\quad\gamma^{\nu}\to0,\quad\mbox{and}\quad\sum_{\nu}\gamma^{\nu}=+\infty,\label{eq:diminishing_step_size}\vspace{-0.1cm}
\end{equation}
then $\{\bx^{\nu}\}$ is bounded and
at least one of its limit points is stationary.
If, in addition,   Assumption 4 holds and $\mathcal{X}$ is compact, every limit point of $\{\bx^{\nu}\}$ is  stationary.

Furthermore, if the algorithm does not stop after a finite number of steps, none of the stationary  points above is a local maximum of
$U$.
\end{theorem}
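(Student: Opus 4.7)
The plan is to establish all four claims by building a unified descent-type analysis around the best-response map $\bx \mapsto \hat\bx(\bx)$, exploiting strong convexity of each subproblem together with the tangency relations B2 and C5.

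\textbf{Part (a) and the base descent inequality.} Feasibility follows by induction on $\nu$: given $\bx^\nu \in \mathcal{X}$, C2 gives $\tilde g_j(\bx^\nu;\bx^\nu)=g_j(\bx^\nu)\le 0$, so $\bx^\nu \in \mathcal{X}(\bx^\nu)$. Since $\mathcal{X}(\bx^\nu)$ is convex (by A1 and C1), contains both $\bx^\nu$ and $\hat\bx(\bx^\nu)$, and $\gamma^\nu\in(0,1]$, the iterate $\bx^{\nu+1}$ lies in $\mathcal{X}(\bx^\nu)$, which is contained in $\mathcal{X}$ by C3. Next, I would use strong convexity B1 and the first-order optimality of $\hat\bx(\bx^\nu)$, together with the tangency identity $\nabla_\bx\tilde U(\bx^\nu;\bx^\nu)=\nabla U(\bx^\nu)$ from B2, to derive the key gradient inequality
$$\nabla U(\bx^\nu)^T(\hat\bx(\bx^\nu)-\bx^\nu) \le -c_{\tilde U}\,\|\hat\bx(\bx^\nu)-\bx^\nu\|^2.$$
Combining this with the descent lemma for $U$ (available from A3) applied to $\bx^{\nu+1}=\bx^\nu+\gamma^\nu(\hat\bx(\bx^\nu)-\bx^\nu)$ yields
$$U(\bx^{\nu+1}) \le U(\bx^\nu) - \gamma^\nu\bigl(c_{\tilde U}-\tfrac{1}{2}\gamma^\nu L_{\nabla U}\bigr)\|\hat\bx(\bx^\nu)-\bx^\nu\|^2.$$

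\textbf{Part (b).} Under the constant step-size condition the coefficient above is bounded below by a positive constant $\delta>0$, so $\{U(\bx^\nu)\}$ is nonincreasing; together with coercivity A4 this gives boundedness of $\{\bx^\nu\}$ and $\sum_\nu \|\hat\bx(\bx^\nu)-\bx^\nu\|^2<\infty$, hence $\|\hat\bx(\bx^\nu)-\bx^\nu\|\to 0$. The decisive structural step is then to show that any fixed point $\bar\bx=\hat\bx(\bar\bx)$ of the best-response map is stationary for \ref{eq:nonnonU}: writing the KKT system for $\hat\bx(\bar\bx)$ as the minimizer on $\mathcal{X}(\bar\bx)$ and applying B2, C2 and C5, the gradients and active indices match those of the original problem; regularity (Assumption 5) is what guarantees the existence and inheritance of multipliers. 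For an arbitrary limit point $\bar\bx$ of $\{\bx^{\nu_k}\}$, a standard Berge-type argument using B1, B3, C4, C6 gives continuity of $\hat\bx(\cdot)$, so $\|\hat\bx(\bar\bx)-\bar\bx\|=0$ and $\bar\bx$ is stationary.

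\textbf{Part (c) and the non-maximum claim.} For the diminishing step-size, summing the descent inequality for $\nu$ large enough that $\tfrac{1}{2}\gamma^\nu L_{\nabla U}<c_{\tilde U}$, together with $\sum\gamma^\nu=\infty$, yields $\liminf_\nu\|\hat\bx(\bx^\nu)-\bx^\nu\|=0$; combined with the continuity argument above this produces at least one stationary limit point. To strengthen this to \emph{every} limit point under Assumption 4 and compactness, I would derive a Lipschitz-type estimate on $\hat\bx(\cdot)$ from B4, B5, C7 via the standard perturbation bound for strongly convex parametric programs, and argue by contradiction: if some subsequence converges to $\bar\bx$ with $\|\hat\bx(\bar\bx)-\bar\bx\|>0$, then the gap $\|\hat\bx(\bx^\nu)-\bx^\nu\|$ remains bounded away from zero on an entire neighborhood, and a chain argument using $\sum\gamma^\nu=\infty$ telescoped against the descent inequality contradicts the fact that $U$ is bounded below. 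Finally, for the non-maximum assertion: if the algorithm never halts, $\hat\bx(\bx^\nu)\neq\bx^\nu$ throughout, so the descent inequality gives strict decrease $U(\bx^{\nu+1})<U(\bx^\nu)$ at each step; for any limit point $\bar\bx$ we have $U(\bx^\nu)\downarrow U(\bar\bx)$, so every neighborhood of $\bar\bx$ contains iterates with $U$-values strictly greater than $U(\bar\bx)$, ruling out a local maximum.

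I expect the \emph{main obstacle} to be the every-limit-point claim in part (c): this is the only place where the full strength of Assumption 4 (B4, B5, C7) and compactness must be activated, and it requires both a quantitative Lipschitz estimate on $\hat\bx(\cdot)$ and a careful nonasymptotic argument along consecutive iterates, rather than the clean summability/continuity combination that suffices for (b).
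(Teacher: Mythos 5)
Your skeleton---the sufficient-descent inequality $\nabla U(\bx^{\nu})\trt(\hat{\bx}(\bx^{\nu})-\bx^{\nu})\le-c_{\tilde{U}}\|\hat{\bx}(\bx^{\nu})-\bx^{\nu}\|^{2}$, the descent lemma under A3, feasibility by induction via C2/C3 and convexity of $\mathcal{X}(\bx^{\nu})$, and the deduction that $\|\hat{\bx}(\bx^{\nu})-\bx^{\nu}\|$ tends to zero (resp.\ has liminf zero) under the two step-size rules---coincides with the paper's, as does your argument excluding local maxima. The genuine gap is in how you pass from a vanishing gap to stationarity of the limit points. You reduce this to ``every fixed point of the continuous map $\hat{\bx}(\cdot)$ is stationary'' and invoke a Berge-type continuity of $\hat{\bx}(\cdot)$ from B1, B3, C4, C6. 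But continuity of $\hat{\bx}(\cdot)$ is not available under Assumptions 1--3 and 5 alone: because the feasible set $\mathcal{X}(\cdot)$ of the subproblem moves with the base point, the paper establishes continuity of $\hat{\bx}(\cdot)$ only under the additional Lipschitz condition B5 of Assumption 4 {[}Lemma \ref{th:Lemma_descent}(iii){]}, via continuity of the set-valued map $\mathcal{X}(\cdot)$ at regular points and a stability theorem for parametric variational inequalities. Parts (b) and the first claim of (c) assume only Assumptions 1--3 and 5, so your argument leans on a tool that is not licensed there, and making a Berge argument rigorous would in any case require you to prove inner semicontinuity of $\mathcal{X}(\cdot)$ and to localize to compact values, none of which you address.

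The paper circumvents this with a direct limiting-KKT argument along the sequence (Theorem \ref{th:conver-1} in the appendix): it writes the KKT system of the subproblem at $\hat{\bx}(\bx^{\nu})$, shows that MFCQ holds at $\hat{\bx}(\bx^{\nu})$ for all large $\nu$ (by a normalization-and-contradiction argument using the outer semicontinuity of $N_{\mathcal{K}}(\cdot)$), shows that the resulting multiplier sequence is bounded (again by contradiction with the regularity of $\bar{\bx}$), and only then passes to the limit using B2, C5. Your proposal addresses none of these three steps---in particular the boundedness of the multipliers, which is where Assumption 5 does its real work along the sequence. For the ``every limit point'' claim in part (c) your outline is closer in spirit to the paper's, but note that the sensitivity estimate obtainable there is only H\"older-$1/2$ rather than Lipschitz, and it is not a ``standard perturbation bound'' for strongly convex programs: it rests on the Aubin property of $\mathcal{X}(\cdot)$ and a result of Yen on projections onto moving convex sets, precisely because the constraint set of the subproblem is parameter-dependent; the subsequent contradiction also needs the two-threshold crossing argument between consecutive iterates rather than a neighborhood bound.
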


\begin{proof}The proof is quite involved and is given in the appendix; rather classically, its crucial point is showing that
$\underset{_{\nu\rightarrow\infty}}{\text{{lim(inf)}}}\,\|\hat{\bold x}({\bold x}^{\nu})-{\bold x}^{\nu}\|=0.$
\end{proof}\vspace{-0.2cm}
%
%
%


\subsection{Discussions on Algorithm \ref{algoC}}

Algorithm \ref{algoC} describes  a novel family of inner
convex approximation  methods for problem \ref{eq:nonnonU}. Roughly speaking, it
consists in solving the sequence of strongly convex problems  \ref{eq:k2}
wherein the original objective function $U$ is replaced by the strongly
convex (simple) approximation $\tilde{U}$, and the nonconvex constraints
$g_{j}$s with the convex upper estimates $\tilde{g}_{j}$s; convex
constraints, if any, are kept unaltered. A step-size in the update of
the iterates $\bx^{\nu}$ is also  used, in the form of a convex
combination via $\gamma^{\nu}\in(0,1]$ (cf. Step 3). Note that the
iterates $\{\bx^{\nu}\}$ generated by the algorithm are all feasible for
the original problem \ref{eq:nonnonU}. Convergence
is guaranteed under mild assumptions that offer a lot of flexibility
in the choice of the approximation functions and free parameters {[}cf.
Theorem \ref{th:conver}(b) and (c){]}, making the proposed scheme appealing
for many applications. We provide next some examples of candidate
approximants, covering a variety of situations and problems
of practical interest.

\subsubsection{\noindent On the approximations $\tilde{g}_{j}$s}

\noindent As already mentioned, while assumption C3 might look rather
elusive, in many practical cases an upper approximate function for
the nonconvex constraints $g_{j}$s is close at hand. Some examples
of $\tilde{g}_{j}$ satisfying Assumption 3 (and in particular C3)
are given next; specific applications where such approximations are used
 are discussed in detail in Part II of the paper \cite{FacchineiLamparielloScutariTSP14-PartII}.

\noindent \emph{Example \#1$-$ Nonconvex constraints with Lipschitz
gradients}. If the nonconvex function $g_{j}$ does not have a special
structure but Lipschitz continuous gradient on $\mathcal{K}$ with
constant $L_{\nabla g_{j}}$, the following convex approximation function
is a global upper bound of $g_{j}$: for all $\bx\in\mathcal{K}$
and $\by\in\mathcal{X}$,\vspace{-0.1cm}
\begin{equation}
\tilde{g}_{j}(\bx;\by)\triangleq g_{j}(\by)+\nabla_{\bx}g_{j}(\by)\trt(\bx-\by)+\frac{L_{\nabla g_{j}}}{2}\|\bx-\by\|^{2}\ge g_{j}(\bx).\label{eq:Lip}
\end{equation}
\emph{Example \#2$-$ Nonconvex constraints with (uniformly) bounded
Hessian matrix}. Suppose that $g_{j}$ is (nonconvex) $\mathcal{C}^{2}$
with second order bounded derivatives on $\mathcal{K}$. Then, one
can find a matrix $\mathbf{G}\succ\mathbf{0}$ such that $\nabla_{\mathbf{x}}^{2}g_{j}(\mathbf{x})+\mathbf{G}\succeq0$
for all $\mathbf{x}\in\mathcal{K}$. For instance, one can set $\mathbf{G}=\left|\min_{\mathbf{x}\in\mathcal{K}}\lambda_{\min}(\nabla_{\mathbf{x}}^{2}g_{j}(\mathbf{x}))\right|\cdot\mathbf{I}$,
with $\lambda_{\min}(\nabla_{\mathbf{x}}^{2}g_{j}(\mathbf{x}))$ denoting
the minimum eigenvalue of $\nabla_{\mathbf{x}}^{2}g_{j}(\mathbf{x})$
(which is a negative quantity if $g_{j}$ is nonconvex). Then, the
unstructured nonconvex constraint $g_{j}$ can be equivalently written
as a DC function:
\begin{equation}
g_{j}(\bx)=\underset{\triangleq g_{j}^{+}(\bx)}{\underbrace{g_{j}(\bx)+\dfrac{{1}}{2}\,\mathbf{x}^{T}\mathbf{G}\mathbf{x}}}-\underset{\triangleq g_{j}^{-}(\bx)}{\underbrace{\dfrac{{1}}{2}\,\mathbf{x}^{T}\mathbf{G}\mathbf{x}},}\vspace{-0.2cm}\label{eq:DC_constraint}
\end{equation}
where $g_{j}^{+}$ and $g_{j}^{-}$\emph{ }are\emph{ }two \emph{convex
}continuously differentiable functions. An approximant $\tilde{g}_{j}$
of $g_{j}$  satisfying Assumption
3 can then  readily be obtained by linearizing $g_{j}^{-}(\bx)$; see Example
$\#3$ below for details.

The two examples above cover successfully quite general \emph{unstructured
}functions $g_{j}$. However, in some cases, the function parameters
involved in the approximations$-$the constants $L_{\nabla g_{j}}$
or $\left|\min_{\mathbf{x}\in\mathcal{K}}\lambda_{\min}(\nabla_{\mathbf{x}}^{2}g_{j}(\mathbf{x}))\right|-$are
not known exactly but need to be estimated; if the estimates are not
tight, the resulting $\tilde{g}_{j}$ might be a loose overestimation
of $g_{j}$, which may negatively affect the practical convergence
of Algorithm 1. Other approximations can be obtained when $g_{j}$
has further structure to exploit, as discussed in the next examples.

\noindent \emph{Example \#3$-$ Nonconvex constraints with DC structure}.
Suppose that  $g_{j}$ has a DC structure, that is,\vspace{-0.2cm}
\[
g_{j}(\bx)=g_{j}^{+}(\bx)-g_{j}^{-}(\bx)
\]
is the difference of two convex and continuously differentiable functions
$g_{j}^{+}$ and $g_{j}^{-}$. By linearizing the concave part $-g_{j}^{-}$
and keeping the convex part $g_{j}^{+}$ unchanged, we obtain the
following convex upper approximation of $g_{j}$: for all $\bx\in\mathcal{K}$
and $\by\in\mathcal{X}$,\vspace{-0.1cm}
\begin{equation}
\tilde{g}_{j}(\bx;\by)\triangleq g_{j}^{+}(\bx)-g_{j}^{-}(\by)-\nabla_{\bx}g_{j}^{-}(\by)\trt(\bx-\by)\ge g_{j}(\bx).\label{eq:DC}
\end{equation}

\noindent \emph{Example \#4$-$ Bi-linear constraints. }Suppose that\emph{
}$g_{j}$ has a bi-linear structure, that is,\vspace{-0.1cm}
\begin{equation}
g_{j}(x_{1},x_{2})=x_{1}\cdot x_{2}.\label{eq:bi-convex_constraints}
\end{equation}
%
%
Observe preliminarily that $g_{j}(x_{1},x_{2})$ can be rewritten as a DC function:\vspace{-0.1cm}
\begin{equation}
g_{j}(x_{1},x_{2})=\dfrac{{1}}{2}\,(x_{1}+x_{2})^{2}-\dfrac{{1}}{2}\,(x_{1}^{2}+x_{2}^{2}).\label{eq:bi-convex_constraints_app2}
\end{equation}

\noindent A valid $\tilde{g}_{j}$ can be then obtained linearizing
the concave part in (\ref{eq:bi-convex_constraints_app2}): for any
given $(y_{1},y_{2})\in\mathbb{{R}}^{2}$,
\begin{eqnarray*}
\tilde{g}_{j}\left(x_{1},x_{2};\, y_{1},y_{2}\right) & \triangleq & \dfrac{{1}}{2}\,(x_{1}+x_{2})^{2}-\dfrac{{1}}{2}\,(y_{1}^{2}+y_{2}^{2})\\
 &  & -y_{1}\cdot(x_{1}-y_{1})-y_{2}\cdot(x_{2}-y_{2}).
\end{eqnarray*}
In Part II of the paper \cite{FacchineiLamparielloScutariTSP14-PartII}
we show that the constraint functions of many resource allocation problems
in wireless systems and networking fit naturally in Examples 1-4 above.

\subsubsection{\noindent On the approximation $\tilde{U}$\label{sub:On-the-approximation_U}}

\noindent The function $\tilde{{U}}$ should be regarded as a (possibly
simple) convex approximation that preserves the first order
properties of $U$. Some instances of valid $\tilde{U}$s for a specific
$U$ occurring in practical applications are discussed next.

\noindent \emph{Example \#5$-$ Block-wise convex $U(\bx_{1},\ldots,\bx_{n})$}.
In many applications, the vector of variables $\bx$ is partitioned
in blocks $\bx=(\bx_{i})_{i=1}^{I}$ and the function $U$ is convex
in each block $\bx_{i}$ separately, but not jointly. A natural approximation
for such a $U$ exploring its ``partial\textquotedbl{} convexity
is\vspace{-0.1cm}
\begin{equation}
\tilde{U}(\bx;\by)=\sum_{i=1}^{I}\tilde{U}_{i}(\bx_{i};\by),\label{eq:separable_U}
\end{equation}
with each $\tilde{U}_{i}(\bx_{i};\by)$  defined as
\begin{equation}
\tilde{U}_{i}(\bx_{i};\by)\triangleq U(\bx_{i},\by_{-i})+\dfrac{\tau_{i}}{2}(\bx_{i}-\by_{i})^{T}\mathbf{H}_{i}(\by)(\bx_{i}-\by_{i}),\label{U_i_tilde}
\end{equation}
where $\by\triangleq(\by_{i})_{i=1}^{I}$, $\by_{-i}\triangleq(\by_{j})_{j\neq i}$,
and $\mathbf{H}_{i}(\by)$ is any uniformly positive definite matrix
(possibly depending on $\by$). Note that the quadratic term in (\ref{U_i_tilde})
can be set to zero if $U(\bx_{i},\by_{-i})$ is strongly convex in
$\bx_{i}$, uniformly for all feasible $\by_{-i}$.  An alternative choice for $\tilde{U}_{i}(\bx_{i};\by)$
is
\[
\begin{array}{l}
\tilde{U}_{i}(\bx_{i};\by)\triangleq\nabla_{\mathbf{x}_{i}}U(\mathbf{y})^{T}(\mathbf{x}_{i}-\mathbf{y}_{i})\\
\qquad\quad+\dfrac{{1}}{2}\,(\mathbf{x}_{i}-\mathbf{y}_{i})^{T}\nabla_{\mathbf{x}_{i}}^{2}U(\mathbf{y})(\mathbf{x}_{i}-\mathbf{y}_{i})+\dfrac{{\tau_{i}}}{2}\left\Vert \mathbf{x}_{i}-\mathbf{y}_{i}\right\Vert ^{2},
\end{array}
\]
where $\nabla_{\mathbf{x}_{i}}^{2}U(\mathbf{y})$ is the Hessian of
$U$ w.r.t. $\bx_{i}$ evaluated in $\by$. One can also use any positive
definite ``approximation'' of $\nabla_{\mathbf{x}_{i}}^{2}U(\mathbf{y})$.
Needless to say, if $U(\bx_{1},\ldots,\bx_{n})$ is \emph{jointly}
convex in all the variables' blocks, then $\tilde{U}(\bx;\by)$ can
be chosen so that
\begin{equation}\label{eq:convex plus}
\tilde{U}(\bx;\by)\triangleq U(\bx)+\sum_{i}\frac{\tau_{i}}{2}\|\bx_{i}-\by_{i}\|^{2},
\end{equation}
where $\frac{\tau_{i}}{2}\|\bx_{i}-\by_{i}\|^{2}$ is not needed if
$U(\bx_{i},$ $\bx_{-i})$ is strongly convex in $\bx_{i}$, uniformly
for all feasible $\bx_{-i}$.\textcolor{red}{\smallskip{}
}

\noindent \emph{Example \#6$-$(Proximal) gradient-like approximations}.
If no convexity whatsoever is present in $U$, mimicking proximal-gradient
methods, a valid choice of $\tilde{U}$ is the first order approximation
of $U$, that is, $\tilde{U}(\bx;\by)=\sum_{i=1}^{I}\tilde{U}_{i}(\bx_{i};\by)$,
with each
\[
\tilde{U}_{i}(\bx_{i};\by)\triangleq\nabla_{\mathbf{x}_{i}}U(\mathbf{y})^{T}(\mathbf{x}_{i}-\mathbf{y}_{i})+\frac{\tau_{i}}{2}\left\Vert \mathbf{x}_{i}-\mathbf{y}_{i}\right\Vert ^{2}.
\]
 Note that even though classical (proximal) gradient descent methods
(see, e.g., \cite{Bertsekas_Book-Parallel-Comp}) share the same approximation
function, they are not applicable to problem \ref{eq:nonnonU}, due to the nonconvexity
of the feasible set.\textcolor{red}{\smallskip{}
}

\noindent \emph{Example \#7$-$ Sum-utility function}. In multi-agent
scenarios, the objective function $U$ is generally written as $U(\bx)\triangleq\sum_{i=1}^{I}f_{i}(\bx_{1},\ldots,$
$\bx_{I})$, that is, the sum of the utilities $f_{i}(\bx_{1},\ldots,\bx_{I})$
of $I$ agents, each  controlling the variables $\bx_{i}$.
A typical situation is when the $f_{i}$s are convex in some agents' variables.
To capture this property, let us define by
\[
\mathcal{S}_{i}\triangleq\left\{ j\,:\, f_{j}(\bullet,\bx_{-i})\,\text{is convex in }\bx_{i},\,\forall(\bx_{i},\bx_{-i})\in\mathcal{K}\right\}
\]
 the set of indices of all the functions $f_{j}(\bx_{i},\bx_{-i})$
that are convex in $\bx_{i}$, for any feasible $\bx_{-i}$, and let
${\mathcal{C}}_{i}\subseteq{\mathcal{S}}_{i}$ be any subset of ${\mathcal{S}}_{i}$.
Then, the following approximation function $\tilde{U}(\bx;\by)$ satisfies
Assumption 2 while exploiting the partial convexity of $U$ (if any):
$\tilde{U}(\bx;\by)=\sum_{i=1}^{I}\tilde{U}_{\mathcal{C}_{i}}(\bx_{i};\by)$,
with each $\tilde{U}_{\mathcal{C}_{i}}$ defined as
\[
\begin{array}{l}
\tilde{U}_{\mathcal{C}_{i}}(\bx_{i};\by)\triangleq{\displaystyle {\sum_{j\in{\mathcal{C}}_{i}}}}f_{j}(\bx_{i},\by_{-i})\medskip
+{\displaystyle {\sum_{k\notin{\mathcal{C}}_{i}}}}\nabla_{\bx_{i}}f_{k}(\by)\trt(\bx_{i}-\by_{i})\smallskip\\
\qquad\quad\quad\qquad+\dfrac{{\tau_{i}}}{2}\,(\mathbf{x}_{i}-\mathbf{y}_{i})^{T}\mathbf{H}_{i}(\mathbf{y})(\mathbf{x}_{i}-\mathbf{y}_{i}),
\end{array}
\]
where $\mathbf{H}_{i}(\by)$ is any uniformly positive definite matrix.
Roughly speaking, for each agent $i$ we built an approximation function such that the convex part of $U$ w.r.t. $\bx_i$ may be preserved while  the nonconvex part is linearized.

\noindent \emph{Example \#8$-$ Product of functions}.
The function $U$ is often the product of functions (see Part II \cite{FacchineiLamparielloScutariTSP14-PartII} for some examples); we consider here the product of two functions, but the proposed approach can be readily extended to the case of three or more functions or to the sum of such product terms.
Suppose that  $U(\bx) = f_1(\bx) f_2(\bx)$, with $f_1$ and $f_2$ convex and positive.
 In view of the expression of the gradient of $U$,
$\nabla_{\bx} U = f_2 \nabla_{\bx} f_1 + f_1 \nabla_{\bx} f_2$, it seems natural to consider the approximation
\[
\tilde U(\bx;\by) = f_1(\bx) f_2(\by)  + f_1(\by) f_2(\bx) +
\frac{\tau_{i}}{2}(\mathbf{x}-\mathbf{y})^{T}\mathbf{H}(\mathbf{y})(\mathbf{x}-\mathbf{y}),
\]
where, as usual, $\mathbf{H}(\by)$ is a uniformly positive definite matrix; this term can be omitted if $f_1$ and $f_2$ are bounded away from zero on the feasible set
and $f_1+ f_2$ is strongly convex  (for example if one of the two functions is strongly convex).
It is clear that this $\tilde U$ satisfies Assumption 2. In case $f_1$ and $f_2$ are still positive but not necessarily convex, we can
use the expression
\[
\tilde U(\bx;\by) = \tilde f_1(\bx;\by) f_2(\by)  + f_1(\by) \tilde f_2(\bx;\by),
\]
where $\tilde f_1$ and $\tilde f_2$ are any legitimate approximations for $f_1$ and $f_2$, for example those
considered in Examples 5-7 above. Finally, if $f_1$ and $f_2$ can take non-positive values, we can write
\[
\tilde U(\bx;\by) = \tilde h_1(\bx, \by) +  \tilde h_2(\bx,\by),
\]
where $h_1(\bx, \by) \triangleq \tilde f_1(\bx;\by) f_2(\by)$, $h_2(\bx, \by) \triangleq f_1(\by)\tilde f_2(\bx;\by)$, and $\tilde h_1$
and $\tilde h_2$ are legitimate approximations for $h_1$ and $h_2$, for example, again, those
considered in Examples 5-7.
Note that in the last cases we no longer need the quadratic term because it is already included in the approximations $\tilde f_1$ and $\tilde f_2$, and $\tilde h_1$ and $\tilde h_2$, respectively. As a final remark, it is important to point out that the the $U$s discussed above belong to a class of nonconvex functions for which it does not seem possible to find a global convex upper bound; therefore, all current SCA techniques (see, e.g.,  \cite{MarWrig78,RazaviyaynHongLuo_SIAMOPT13,QuDie11}) are not applicable. \smallskip

We conclude the discussion on the approximation functions observing
that, in  Examples 5-7, the proposed $\tilde{U}(\bx;\by)$s are all separable in the blocks $\bx_{i}$ for any given $\mathbf{y}$; in Example 8, instead, the separability is problem dependent and should be examined on a case-by-case basis.
The separability of the $\tilde U$s  paves the way to parallelizable and distributed versions
of Algorithm 1;  we discuss this topic more in detail in Sec. \ref{sec:Distributed-implementation}.\smallskip{}

\subsubsection{\noindent On the choice of the step-size rule $\gamma^{\nu}$}\label{subsec:gamma}

\noindent Theorem \ref{th:conver} states that Algorithm \ref{algoC}
converges either employing a constant step-size rule {[}cf. (\ref{eq:constant_step_size}){]}
or a diminishing step-size rule [cf. (\ref{eq:diminishing_step_size})].

If a constant step-size is used, one can set in (\ref{eq:constant_step_size})
$\gamma^{\nu}=\gamma\le\gamma^{\max}$ for every $\nu$, and choose
any $\gamma^{\max}\in(0,1]$ and $c_{\tilde{U}}$ so that  $2c_{\tilde{U}}>\gamma^{\max}L_{\nabla U}$
(recall that $c_{\tilde{U}}$ is the constant of strong convexity
of the approximation $\tilde{{U}}$ and, thus, is a degree of freedom).
This can be done in several ways. For instance, if the chosen $\tilde{{U}}$
contains a proximal term with gain $\tau>0$, i.e., a term of the type $(\tau/2) \|\bx - \by\|^2$, then the inequality
$2c_{\tilde{U}}>\gamma^{\max}L_{\nabla U}$ is readily satisfied setting
$2\tau/\gamma^{\max}>\, L_{\nabla U}$ (we used $c_{\tilde{U}}\geq\tau$).
Note that this simple (but conservative) condition imposes a constraint
only on the ratio $\tau/\gamma^{\max}$, leaving free the choice of
one of the two parameters. An interesting special case worth mentioning
is when $\gamma^{\max}=1$ and $2\tau>L_{\nabla U}$: the choice  $\gamma^\nu=1$  leads to an instance of Algorithm 1 with no memory, i.e., $\mathbf{x}^{\nu+1}=\hat{\bold x}({\bold x}^{\nu})$, for all $\nu$.

When the Lipschitz constant $L_{\nabla U}$ cannot be estimated, one
can use a diminishing step-size rule, satisfying the standard conditions
(\ref{eq:diminishing_step_size}). A rule that we found to work
well in practice is, see \cite{ScutFacchSonPalPang13}:\vspace{-0.3cm}
\begin{equation}
\gamma^{\nu}=\gamma^{\nu-1}(1-\varepsilon\gamma^{\nu-1}),\quad\nu\geq1,\label{eq:step-size}
\end{equation}
with  $\gamma^{0}\in(0,1]$ and $\varepsilon\in(0,1)$. Other effective
rules can be found in  \cite{ScutFacchSonPalPang13}. Notice that,
while this rule may still require some tuning for optimal behavior,
it is quite reliable, since in general we are not using a (sub)gradient
direction, so that many of the well-known practical drawbacks associated
with a (sub)gradient method with diminishing step-size are mitigated
in our setting. Furthermore, this choice of step-size does not require
any form of centralized coordination and, thus, provides a favorable feature
in distributed environments.

We remark that it is possible to prove the convergence of Algorithm 1
also using other step-size rules, such as a standard Armijo-like line-search
procedure. We omit the discussion of line-search based approaches because such options
are not in line with our goal of developing distributed algorithms,
see Sec. \ref{sec:Distributed-implementation}.
In \cite{BeckBenTTetr10} it is shown that, in the specific case of a {\em strongly convex} $U$ and, in our terminology, $\tilde U =U$ and ${\cal K} = \Re^n$, by choosing
$\gamma^\nu =1$ at every iteration, one can prove the stationarity of every limit point of the sequence generated by Algorithm 1
(assuming regularity). We can easily derive this particular result from our general analysis, see
Remark \ref{rem:btone} in the Appendix. Here we only mention that, attractive as this result may be, the strong convexity of $U$ is a very restrictive assumption, and forcing $\tilde U =U$ does not permit the development of distributed versions of Algorithm 1.

Finally, as a technical note, it is interesting to contrast the different
kinds of convergence
that one can obtain by choosing
a constant or a diminishing step-size rule. In the former case, every
 limit point of the sequence generated by Algorithm 1 is guaranteed to be a stationary solution of the
original nonconvex problem \ref{eq:nonnonU}, whereas, in the latter
case, there exists at least a limit point being stationary, which  thus is a weaker condition. On the other hand, a diminishing step-size has
been observed to be numerically more efficient than a constant one.
In order to obtain, also with a diminishing step-size rule, the strong convergence behavior that can be guaranteed
when a constant step-size is employed, one needs extra conditions
on the approximation functions $\tilde{{U}}$ and $\tilde{{g}}$
(cf. Assumptions 4); these conditions are quite standard and easy
to be satisfied in most practical applications (as those studied
in Part II \cite{FacchineiLamparielloScutariTSP14-PartII}).\vspace{-0.2cm}

\subsection{\noindent Related works\label{sub:Related-works}}

Our approach draws on the  
SCA paradigm, which has been widely explored in the literature,
see \cite{MarWrig78,BeckBenTTetr10,RazaviyaynHongLuo_SIAMOPT13,ScutFacchSonPalPang13,
AlvScutPang13,QuDie11}. 
However, our framework and convergence conditions unify and extend
current SCA methods in several directions, as outlined next.

\noindent $-$\emph{On the approximation functions}: Conditions on
the approximation function $\tilde{U}$ as in Assumption 2 are relatively
weak: this feature allows us to enlarge significantly the class of utility functions $U$s
that can be successfully handled by Algorithm 1. A key difference
with current SCA methods {[}applicable to special cases of \ref{eq:nonnonU}{]}
\cite{MarWrig78,RazaviyaynHongLuo_SIAMOPT13} and DC programs \cite{QuDie11}
is that the
 approximation $\tilde{U}(\bx;\by)$ need not
be a tight \emph{global upper bound} of the objective function $U(\bx)$
for every $\bx \in \mathcal{K}$ and $\by\in\mathcal{X}$ {[}cf. Assumption 2{]}. This fact represents
a big step forward in the literature of SCA methods; Part II of the paper
\cite{FacchineiLamparielloScutariTSP14-PartII} provides a solid evidence
of the wide range of applicability of the proposed framework.

\noindent \emph{$-$Convergence conditions: }There are only a few
SCA-based methods in the literature handling nonconvex constraints,
namely \cite{MarWrig78,BeckBenTTetr10,QuDie11}, and the existing
convergence results are quite weak. In particular, \cite[Th. 1]{MarWrig78}
states that if {\em the whole sequence converges,} then the algorithm
converges to a stationary point; however, in general, it is hardly possible
to show that the sequence generated by the algorithms does converge.
In \cite{BeckBenTTetr10}, (subsequence)
convergence to regular points is proved, but only for nonconvex problems
with \emph{strongly convex} objective functions; this fact restricts considerably
the range of applicability of this result (for instance, none
of the problems that we study in Part II \cite{FacchineiLamparielloScutariTSP14-PartII}
have strongly convex objective functions). Finally, \cite{QuDie11}
can handle only (possibly nonsmooth) nonconvex problems  whose objective
functions and constraints have a DC form. To the best of our knowledge,
this work is the first attempt towards the generalization of SCA methods
to nonconvex problems having general nonconvex objective functions
and constraints.

\noindent \emph{$-$Distributed implementation: }A second key and
unique feature of Algorithm 1,  missing in current SCA schemes
\cite{MarWrig78,BeckBenTTetr10,QuDie11}, is that it easily leads
to distributed implementations, as we will discuss in Sec.
\ref{sec:Distributed-implementation}. This feature, along with the
feasibility of the iterates, represents a key difference also with classical
techniques \cite{LawTit01,FerrMan94,SagSol02,Sol98} that have been proposed in the
literature to deal with nonconvex optimization problems.\vspace{-0.1cm}

\section{Distributed implementation\label{sec:Distributed-implementation}}

In many applications, e.g., multi-agent optimization or distributed
networking, it is desirable to keep users coordination and communication
overhead at minimum level. In this section we discuss distributed versions of
Algorithm 1. Of course,  we need to assume that problem \ref{eq:nonnonU} has some
suitable structure, and that consistent choices on $\tilde U$ and $\tilde g$  are made.
Therefore, in this section we consider the following additional assumptions.\smallskip{}

\noindent \textbf{Assumption 6 (Decomposabilty)}. Given \ref{eq:nonnonU}, suppose
that:

\noindent D1) the set $\mathcal{K}$ has a Cartesian structure, i.e.,
$\mathcal{K}=\mathcal{K}_{1}\times\cdots\times\mathcal{K}_{I}$, with
each $\mathcal{K}_{i}\subset\mathbb{R}^{n_{i}}$, and $\sum_{i}n_{i}=n$;
 $\bx\triangleq(\bx_{i})_{i=1}^{I}$ is partitioned accordingly,
with each $\bx_{i}\in\mathcal{K}_{i}$;

\noindent D2) the approximate function $\tilde{U}(\bx;\by)$ satisfying
Assumption 2 is chosen so that $\tilde{U}(\bx;\by)=\sum_{i}\tilde{U}_{i}(\bx_{i};\by)$;

\noindent D3) each approximate function $\tilde{g}_{j}(\bx;\by)$
satisfying Assumption 3 is (block) separable in the $\bx$-variables,
for any given $\by$, that is, $\tilde{g}_{j}(\bx;\by)=\sum_{i}\tilde{g}_{j}^{i}(\bx_{i};\by)$,
for some $\tilde{g}_{j}^{i}:\mathcal{K}_{i}\times\mathcal{X}\rightarrow\mathbb{R}$.\medskip

\noindent
Condition D1 is a very natural assumption on problem \ref{eq:nonnonU} and is usually satisfied  when
a distributed implementation is called for.  If problem \ref{eq:nonnonU} does not satisfy this assumption, it is
not realistic to expect that efficient distributed solution methods can be devised; D2 and D3, instead, are assumptions on our algorithmic choices. In particular, condition D2  permits  many choices for $\tilde{U}$.
For instance, as already discussed at the end of the subsection  "On the approximation $\tilde U$",
essentially all $\tilde{U}$s introduced in Examples 5-7 (and possibly some of the $\tilde U$s in Example 8) satisfy D2. The critical
condition
in Assumption 6 is D3. Some examples of constraints functions $g_{j}$
for which one can find a $\tilde{g}_{j}(\bx;\by)$ satisfying D3 are:

\noindent $-$\emph{Individual nonconvex constraints}: Each $g_{j}$
(still nonconvex) depends only on one of the block variables $\bx_{1},\ldots,\bx_{I}$,
i.e, $g_{j}(\bx)=g_{j}^{i}(\bx_{i})$, for some $g_{j}^{i}:\mathcal{K}_{i}\rightarrow\mathbb{R}$
and $i$;

\noindent $-$\emph{Separable nonconvex constraints}: Each $g_{j}$
has the form $g_{j}(\bx)=\sum_{i}g_{j}^{i}(\bx_{i})$, with $g_{j}^{i}:\mathcal{K}_{i}\rightarrow\mathbb{R}$;

\noindent $-$\emph{Nonconvex constraints with Lipschitz gradients}:
Each $g_{j}$ is not necessarily separable but has Lipschitz gradient
on $\mathcal{K}$. In this case one can choose, e.g., the approximation
$\tilde{g}_{j}$ as in (\ref{eq:Lip}).\smallskip{}

It is important to remark that, even for problems \ref{eq:nonnonU} {[}or  \ref{eq:k2}{]}
for which it looks hard to satisfy D3, the introduction of proper slack variables
can help to decouple the constraint functions, making thus possible
to find a $\tilde{g}_{j}$ that satisfies D3; we refer the reader to Part II of the paper
\cite{FacchineiLamparielloScutariTSP14-PartII} for some non trivial examples where this technique is applied.

For notational simplicity, let us introduce the vector function $\tilde{\mathbf{g}}^{i}(\bx_{i};\bx^{\nu})\triangleq(\tilde{g}_{j}^{i}(\bx_{i};\bx^{\nu}))_{j=1}^{m}$,
for $i=1,\ldots,I$. Under Assumption 6, each subproblem  \ref{eq:k2}
becomes

\begin{equation}
\begin{array}{ll}
\underset{\bx}{\textnormal{min}} & \sum_{i=1}^{I}\tilde{U}_{i}(\bx_{i};\bx^{\nu})\smallskip\\
\mbox{ s.t.} & \hspace{-0.6cm}\left.\begin{array}{ll}
 & \tilde{\mathbf{g}}(\bx;\bx^{\nu})\triangleq\sum_{i}\tilde{\mathbf{g}}^{i}(\bx_{i};\bx^{\nu})\le\mathbf{0},\\[5pt]
 & \bx_{i}\in\mathcal{K}_{i},\quad i=1,\ldots,I.
\end{array}\right.
\end{array}\tag{${\tilde{\mathcal{P}}_{\bx^\nu}}$}\label{eq:k2_dec}
\end{equation}

With a slight abuse of notation, we will still denote the feasible
set of \ref{eq:k2_dec} by $\mathcal{X}(\mathbf{x}^{\nu})$.

The block separable structure of both the objective function and the constraints lends itself to a parallel decomposition of
the subproblems \ref{eq:k2_dec} in the primal or dual domain: hence, it allows the distributed implementation of Step 2 of Algorithm 1. In the next
section we briefly show how to customize standard primal/dual decomposition
techniques (see, e.g., \cite{Palomar-Chiang_ACTran07-Num,Bertsekas_Book-Parallel-Comp})
in order to solve subproblem \ref{eq:k2_dec}. We conclude this section observing
that, if there are only individual constraints in \ref{eq:nonnonU}, given $\bx^{\nu}$,
each \ref{eq:k2_dec} can be split in $I$ independent subproblems
in the variables $\bx_{i}$, even if the original nonconvex $U$ is
\emph{not separable}. To the best of our knowledge, this is the first
attempt to obtain distributed algorithms for a nonconvex problem in
the general form \ref{eq:nonnonU}.\vspace{-0.2cm}

\subsection{Dual decomposition methods\label{sub:dual-decomposition}}

Subproblem \ref{eq:k2_dec} is convex and can be solved in a distributed
way if the constraints $\tilde{\mathbf{g}}(\bx;\bx^{\nu})\le\mathbf{0}$
are dualized.  The dual
problem associated with each \ref{eq:k2_dec} is: given $\mathbf{x}^{\nu}\in\mathcal{X}(\mathbf{x}^{\nu})$, 
\begin{equation}
\begin{aligned}\underset{\boldsymbol{\lambda}\geq\mathbf{0}}{\text{max}}\,\, d(\boldsymbol{\lambda};\mathbf{x}^{\nu})\end{aligned} \label{eq:dualpro}
\end{equation}\vspace{-0.2cm}
with
\begin{equation}
\begin{array}{c}
d(\boldsymbol{\lambda};\mathbf{x}^{\nu})\triangleq\underset{\mathbf{x}\in\mathcal{K}}{\text{min}}\,\left\{ \sum_{i=1}^{I}\left(\tilde{U}_{i}(\bx_{i};\bx^{\nu})+\boldsymbol{\lambda}^{T}\mathbf{\tilde{\mathbf{g}}}^{i}(\bx_{i};\bx^{\nu})\right)\right\} \end{array}\label{eq:dual function}
\end{equation}
Note that, for $\boldsymbol{\lambda} \ge 0$, by Assumptions 2 and 3, the minimization in (\ref{eq:dual function})
has a unique solution, which will be denoted by $\widehat{{\mathbf{x}}}(\boldsymbol{\lambda};\mathbf{x}^{\nu})\triangleq(\widehat{{\mathbf{x}}}_{i}(\boldsymbol{\lambda};\mathbf{x}^{\nu}))_{i=1}^{I}$,
with
\begin{equation}
\widehat{{\mathbf{x}}}_{i}(\boldsymbol{\lambda};\mathbf{x}^{\nu})\triangleq\underset{\mathbf{x}_{i}\in\mathcal{K}_{i}}{\text{{arg}}\text{min}}\,\left\{ \tilde{U}_{i}(\bx_{i};\bx^{\nu})+\boldsymbol{\lambda}^{T}\mathbf{\tilde{\mathbf{g}}}^{i}(\bx_{i};\bx^{\nu})\right\} .\label{eq:best-response_lagrangian}
\end{equation}
Before proceeding, let us mention the following standard condition.\smallskip{}


\noindent D4) $\tilde{\mathbf{g}}(\bullet;\bx^{\nu})$ is uniformly
Lipschitz continuous on $\mathcal{K}$ with constant $L_{\tilde{g}}$;\smallskip

\noindent We  remind that D4 is implied by condition C7 of Assumption 4; therefore we do not need to assume it if Assumption 4
is invoked. But in order to connect our results below to classical ones, it is good to highlight it as a separate assumption.

The next lemma summarizes some desirable properties of the dual function
$d(\boldsymbol{\lambda};\mathbf{x}^{\nu})$, which are instrumental
to prove the convergence of dual schemes.

\begin{lemma} \label{prop:suf cond inner conv} Given \ref{eq:k2_dec},
under Assumptions 1-3, 5, and 6, the following hold.

\noindent (a) $d(\boldsymbol{\lambda};\mathbf{x}^{\nu})$ is differentiable with respect to $\boldsymbol{\lambda}$
on $\mathbb{R}_{+}^{m}$, with gradient\vspace{-0.2cm}
\begin{equation}
\nabla_{\boldsymbol{{\lambda}}} d(\boldsymbol{\lambda};\mathbf{x}^{\nu})=
\sum_{i}\tilde{\mathbf{g}}^{i}\left(\widehat{{\mathbf{x}}}_{i}(\boldsymbol{\lambda};\mathbf{x}^{\nu});\mathbf{x}^{\nu}\right).
\label{eq:gradient_dual}
\end{equation}

\noindent (b) If in addition D4 holds, then $\nabla_{\boldsymbol{{\lambda}}}d(\boldsymbol{\lambda};\mathbf{x}^{\nu})$
is Lipschitz continuous on $\mathbb{R}_{+}^{m}$ with constant \textcolor{red}{${\color{black}L_{\nabla d}\triangleq L_{\tilde{g}}^{2}\,\sqrt{{m}}/c_{\tilde{U}}}$}.
\end{lemma}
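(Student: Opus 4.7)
Part (a) is a direct application of Danskin's theorem. Writing the inner Lagrangian as
\[
L(\bx,\boldsymbol{\lambda};\bx^{\nu})\triangleq \sum_{i=1}^{I}\bigl(\tilde U_i(\bx_i;\bx^{\nu})+\boldsymbol{\lambda}^T\tilde{\mathbf g}^i(\bx_i;\bx^{\nu})\bigr),
\]
we observe that for every fixed $\boldsymbol{\lambda}\ge\mathbf 0$ the map $\bx\mapsto L(\bx,\boldsymbol{\lambda};\bx^{\nu})$ is $c_{\tilde U}$-strongly convex on $\mathcal K$ (B1 plus C1, since each $\lambda_j\ge 0$), so the minimizer $\widehat{\bx}(\boldsymbol{\lambda};\bx^{\nu})$ is unique; moreover $L$ is continuous in $(\bx,\boldsymbol{\lambda})$ (B3, C6) and affine (hence $\mathcal C^{1}$) in $\boldsymbol{\lambda}$ with $\nabla_{\boldsymbol{\lambda}}L=\sum_i\tilde{\mathbf g}^i(\bx_i;\bx^{\nu})$. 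Danskin's theorem then yields the claimed formula for $\nabla_{\boldsymbol{\lambda}}d$ and its continuity on $\mathbb R^m_+$.

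For part (b), the plan is to first establish that the primal solution map is Lipschitz in $\boldsymbol{\lambda}$, and then transfer this to the gradient of $d$ via D4. Fix $\boldsymbol{\lambda}_1,\boldsymbol{\lambda}_2\ge\mathbf 0$ and denote $\bx_k\triangleq\widehat{\bx}(\boldsymbol{\lambda}_k;\bx^{\nu})$, $k=1,2$. The first-order optimality condition for each $\bx_k$ over the convex set $\mathcal K$ reads
\[
\nabla_{\bx}L(\bx_k,\boldsymbol{\lambda}_k;\bx^{\nu})^T(\by-\bx_k)\ge 0,\qquad\forall\by\in\mathcal K.
\]
Taking $\by=\bx_2$ in the condition for $\bx_1$, $\by=\bx_1$ in the condition for $\bx_2$, summing, and adding and subtracting $\nabla_{\bx}L(\bx_2,\boldsymbol{\lambda}_1;\bx^{\nu})$, I obtain
\[
c_{\tilde U}\|\bx_1-\bx_2\|^{2}\le (\bx_1-\bx_2)^T\sum_{j=1}^{m}(\lambda_{1,j}-\lambda_{2,j})\nabla_{\bx}\tilde g_j(\bx_2;\bx^{\nu}),
\]
where the left-hand side uses the $c_{\tilde U}$-strong monotonicity of $\nabla_{\bx}L(\cdot,\boldsymbol{\lambda}_1;\bx^{\nu})$ guaranteed by B1 and the convexity of the $\tilde g_j$'s with $\boldsymbol{\lambda}_1\ge\mathbf 0$.

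The key intermediate bound is then $\|\nabla_{\bx}\tilde g_j(\bx_2;\bx^{\nu})\|\le L_{\tilde g}$ for every $j$, which follows from D4: indeed $|\tilde g_j(\bx;\bx^\nu)-\tilde g_j(\by;\bx^\nu)|\le\|\tilde{\mathbf g}(\bx;\bx^\nu)-\tilde{\mathbf g}(\by;\bx^\nu)\|\le L_{\tilde g}\|\bx-\by\|$, so each component is $L_{\tilde g}$-Lipschitz and its gradient is uniformly norm-bounded by $L_{\tilde g}$. Combining Cauchy--Schwarz with $\sum_j|a_j|\le\sqrt m\,\|\mathbf a\|$, the right-hand side is dominated by $\sqrt m\,L_{\tilde g}\,\|\boldsymbol{\lambda}_1-\boldsymbol{\lambda}_2\|\,\|\bx_1-\bx_2\|$, which yields the Lipschitz estimate for the primal solution map
\[
\|\widehat{\bx}(\boldsymbol{\lambda}_1;\bx^{\nu})-\widehat{\bx}(\boldsymbol{\lambda}_2;\bx^{\nu})\|\le\frac{\sqrt m\,L_{\tilde g}}{c_{\tilde U}}\,\|\boldsymbol{\lambda}_1-\boldsymbol{\lambda}_2\|.
\]
Finally, using part (a) together with D4,
\[
\|\nabla_{\boldsymbol{\lambda}}d(\boldsymbol{\lambda}_1;\bx^{\nu})-\nabla_{\boldsymbol{\lambda}}d(\boldsymbol{\lambda}_2;\bx^{\nu})\|=\|\tilde{\mathbf g}(\bx_1;\bx^{\nu})-\tilde{\mathbf g}(\bx_2;\bx^{\nu})\|\le L_{\tilde g}\|\bx_1-\bx_2\|,
\]
and plugging in the previous inequality gives the stated constant $L_{\nabla d}=L_{\tilde g}^{2}\sqrt m/c_{\tilde U}$.

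The only non-routine step is the Lipschitz continuity of $\widehat{\bx}(\cdot;\bx^{\nu})$, which is a classical sensitivity argument but requires careful bookkeeping of the mixed $(\bx,\boldsymbol{\lambda})$ differences in the two variational inequalities; once that Lipschitz bound is secured, everything else is straightforward.
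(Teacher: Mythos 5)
Your proposal is correct and follows essentially the same route as the paper: Danskin's theorem for part (a), and for part (b) the two variational inequalities at $\widehat{\bx}(\boldsymbol{\lambda}_1;\bx^{\nu})$ and $\widehat{\bx}(\boldsymbol{\lambda}_2;\bx^{\nu})$ combined with the add-and-subtract trick, the strong monotonicity of $\nabla_{\bx}L(\cdot,\boldsymbol{\lambda};\bx^{\nu})$, the gradient bound $\|\nabla_{\bx}\tilde g_j\|\le L_{\tilde g}$ from D4, and the $\ell_1$-to-$\ell_2$ conversion producing the $\sqrt m$ factor. The only cosmetic difference is that you spell out the justification of the gradient bound and the applicability of Danskin in slightly more detail than the paper does.
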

\begin{IEEEproof}
\noindent See Appendix C.
\end{IEEEproof}
The dual-problem can be solved, e.g., using well-known gradient algorithms
\cite{BertsekasNedicOzdaglar_book_convex03}; an instance is given
in Algorithm \ref{alg:Dual-Decomposition}, whose convergence is stated
in Theorem \ref{thm:dual_convergence}. The proof of the theorem follows
from Lemma \ref{prop:suf cond inner conv} and standard convergence
results for gradient projection algorithms (e.g., see \cite[Th. 3.2]{Su-Xu10}
and \cite[Prop. 8.2.6]{BertsekasNedicOzdaglar_book_convex03} for
the theorem statement under assumptions (a) and (b), respectively).
We remark that an assumption made in the aforementioned references is that
subproblem \ref{eq:k2_dec}has a zero-duality
gap and the dual problem \eqref{eq:dualpro}
has a non-empty solution set. In our setting, this is guaranteed by Assumption 5, that ensures that $\mathcal{X}(\mathbf{x}^{\nu})$ satisfies Slater's CQ  (see the discussion around Assumption 5).

In (\ref{eq:sum-rate-dual_update-1}) $[\bullet]_{+}$ denotes the
Euclidean projection onto $\mathbb{R}_{+}$, i.e., $[x]_{+}\triangleq\max(0,x)$.

\begin{algorithm}[h]
\textbf{Data}: $\boldsymbol{\lambda}^{0}\geq\mathbf{0}$, $\mathbf{x}^{\nu}$,
$\{\alpha^{n}\}>0$; set $n=0$.

(\texttt{S.2a}) If $\boldsymbol{\lambda}^{n}$ satisfies a suitable
termination criterion: STOP.

(\texttt{S.2b}) Solve in parallel (\ref{eq:best-response_lagrangian}):
for all $i=1,\dots,I$, compute $\widehat{{\mathbf{x}}}_{i}(\boldsymbol{\lambda}^{n};\mathbf{x}^{\nu})$.

(\texttt{S.2c}) Update $\boldsymbol{\lambda}$ according to
\begin{equation}
\boldsymbol{\lambda}^{n+1}\triangleq\left[\boldsymbol{\lambda}^{n}+\alpha^{n}\,\sum_{i=1}^{I}\tilde{\mathbf{g}}^{i}\left(\widehat{{\mathbf{x}}}_{i}(\boldsymbol{\lambda}^{n};\mathbf{x}^{\nu});\mathbf{x}^{\nu}\right)\right]_{+}.\label{eq:sum-rate-dual_update-1}
\end{equation}

(\texttt{S.2d}) $n\leftarrow n+1$ and go back to (\texttt{S.2a}).

\protect\caption{\hspace{-3pt}\textbf{: \label{alg:Dual-Decomposition}}Dual-based
Distributed Implementation of Step 2 of NOVA Algorithm (Algorithm 1). }
\end{algorithm}

\begin{theorem}\label{thm:dual_convergence} Given \ref{eq:nonnonU},
under Assumptions 1-3, 5, and 6, suppose that one of the two following conditions
is satisfied:

\noindent (a) D4 holds true and $\{\alpha^{n}\}$ is chosen
such that $0<\inf_n\,\alpha^{n}\leq\sup_n\,\alpha^{n}<2/L_{\nabla d}$,
for all $n\geq0$;


\noindent (b)\emph{ }$\nabla_{\boldsymbol{{\lambda}}}d(\boldsymbol{\bullet};\mathbf{x}^{\nu})$
is uniformly bounded on\emph{ }$\mathbb{R}_{+}^{m}$, and $\alpha^{n}$
is chosen such that $\alpha^{n}>0$, $\alpha^{n}\rightarrow0$, $\sum_{n}\alpha^{n}=\infty$,
and $\sum_{n}(\alpha^{n})^{2}<\infty$.

Then, the sequence $\left\{ \boldsymbol{\lambda}^{n}\right\} $
generated by Algorithm \ref{alg:Dual-Decomposition} converges to
a solution of \eqref{eq:dualpro}\emph{, }and the sequence $\{\widehat{{\mathbf{x}}}(\boldsymbol{\lambda}^{n};\mathbf{x}^{\nu})\}$
converges to the unique solution of \ref{eq:k2_dec}. \end{theorem}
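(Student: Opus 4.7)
The plan is to break the proof into two parts: first, establishing convergence of the dual sequence $\{\boldsymbol{\lambda}^n\}$ to a solution of \eqref{eq:dualpro}; second, deriving convergence of the primal sequence $\{\widehat{\mathbf{x}}(\boldsymbol{\lambda}^n;\mathbf{x}^\nu)\}$ to the unique solution of \ref{eq:k2_dec}. Both parts ride on the properties of $d(\boldsymbol{\lambda};\mathbf{x}^\nu)$ already proved in Lemma \ref{prop:suf cond inner conv}, plus zero duality gap for the subproblem.

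For the dual-sequence convergence, I would observe that $d(\bullet;\mathbf{x}^\nu)$ is concave on $\mathbb{R}_+^m$ and, by Lemma \ref{prop:suf cond inner conv}(a), differentiable with gradient given by \eqref{eq:gradient_dual}. Iteration \eqref{eq:sum-rate-dual_update-1} is thus a projected gradient ascent on $\mathbb{R}_+^m$ for a concave differentiable function. Under (a), Lemma \ref{prop:suf cond inner conv}(b) supplies the Lipschitz constant $L_{\nabla d}$ on $\nabla d(\bullet;\mathbf{x}^\nu)$, so with $0<\inf_n\alpha^n\le\sup_n\alpha^n<2/L_{\nabla d}$ the standard convergence result for projected gradient with constant step-size (e.g., \cite[Th.~3.2]{Su-Xu10}) applies. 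Under (b), uniform boundedness of $\nabla d(\bullet;\mathbf{x}^\nu)$ together with the step-size conditions $\alpha^n\to 0$, $\sum\alpha^n=\infty$, $\sum(\alpha^n)^2<\infty$ is precisely the hypothesis of \cite[Prop.~8.2.6]{BertsekasNedicOzdaglar_book_convex03}. In both cases one needs the dual solution set to be nonempty, which follows from Slater's CQ for \ref{eq:k2_dec}: by Assumption 5 and the discussion after the definition of regularity, the feasible set $\mathcal{X}(\mathbf{x}^\nu)$ satisfies Slater, so strong duality holds and the dual optimum $\boldsymbol{\lambda}^\star$ exists.

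For the primal-sequence convergence, the argument is: \ref{eq:k2_dec} is strongly convex (Assumptions B1 and C1), so it admits a unique primal solution, which by zero duality gap coincides with $\widehat{\mathbf{x}}(\boldsymbol{\lambda}^\star;\mathbf{x}^\nu)$. It then suffices to show that the map $\boldsymbol{\lambda}\mapsto\widehat{\mathbf{x}}(\boldsymbol{\lambda};\mathbf{x}^\nu)$ is continuous at $\boldsymbol{\lambda}^\star$, which follows either from a direct strong-convexity argument (using the optimality conditions of \eqref{eq:best-response_lagrangian} and continuity of $\nabla_{\mathbf{x}}\tilde{U}_i$ and $\nabla_{\mathbf{x}}\tilde{\mathbf{g}}^i$) or from Berge's maximum theorem applied to the parametric strongly convex program \eqref{eq:best-response_lagrangian}. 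Composing with $\boldsymbol{\lambda}^n\to\boldsymbol{\lambda}^\star$ then delivers $\widehat{\mathbf{x}}(\boldsymbol{\lambda}^n;\mathbf{x}^\nu)\to\widehat{\mathbf{x}}(\boldsymbol{\lambda}^\star;\mathbf{x}^\nu)$, the unique solution of \ref{eq:k2_dec}.

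I expect the main (and essentially only) obstacle to be the passage from dual to primal convergence: the dual-gradient schemes in the cited references guarantee only $\boldsymbol{\lambda}^n\to\boldsymbol{\lambda}^\star$, and one must argue separately that the candidate primal iterates tracked via \eqref{eq:best-response_lagrangian} inherit this convergence. The strong convexity constant $c_{\tilde{U}}>0$ is the key lever: it makes $\widehat{\mathbf{x}}(\bullet;\mathbf{x}^\nu)$ Lipschitz on bounded subsets of $\mathbb{R}_+^m$ (a fact essentially already exploited in the proof of Lemma \ref{prop:suf cond inner conv}(b)), so the continuity argument is quantitative and short. The remaining bookkeeping (zero duality gap via Slater, nonemptiness of the dual solution set, applicability of the cited gradient-projection theorems) is entirely standard.
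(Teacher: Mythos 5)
Your proposal is correct and follows essentially the same route as the paper, which likewise obtains dual convergence by combining Lemma \ref{prop:suf cond inner conv} with the cited gradient-projection results (\cite[Th.~3.2]{Su-Xu10} under (a), \cite[Prop.~8.2.6]{BertsekasNedicOzdaglar_book_convex03} under (b)) and secures the zero duality gap and nonempty dual solution set via Slater's CQ, guaranteed by Assumption 5. The only difference is that you spell out the dual-to-primal passage (continuity of $\widehat{\mathbf{x}}(\bullet;\mathbf{x}^{\nu})$ at $\boldsymbol{\lambda}^{\star}$ via the strong-convexity estimate already used in the proof of Lemma \ref{prop:suf cond inner conv}(b)), a step the paper leaves implicit; your version of that argument is sound, including under condition (b) where D4 is unavailable, since local boundedness of $\nabla_{\mathbf{x}}\tilde{g}_{j}$ at the fixed point $\widehat{\mathbf{x}}(\boldsymbol{\lambda}^{\star};\mathbf{x}^{\nu})$ suffices.
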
\smallskip{}

\begin{remark}[On the distributed implementation] {The\,implementation
of Algorithm \ref{algoC} based on Algorithm \ref{alg:Dual-Decomposition}
leads to a double-loop scheme:
given the current value of the multipliers $\boldsymbol{{\lambda}}^{n}$,
the subproblems (\ref{eq:best-response_lagrangian}) can be solved
in parallel across the blocks; once the new values $\widehat{{\mathbf{x}}}_{i}(\boldsymbol{\lambda}^{n};\mathbf{x}^{\nu})$
are available, the multipliers are updated according to (\ref{eq:sum-rate-dual_update-1}).
Note that when $m=1$ (i.e., there is only one shared constraint),
the update in (\ref{eq:sum-rate-dual_update-1}) can be replaced by
a bisection search, which generally converges quite fast. When $m>1$,
the potential slow convergence of gradient updates (\ref{eq:sum-rate-dual_update-1})
can be alleviated using accelerated gradient-based (proximal) schemes;
see, e.g., \cite{Nesterov04,DualAceleration-1}}.

As far as the communication overhead is concerned, the required signaling
 is in the form of message
passing and of course is problem dependent; see Part II of the paper
\cite{FacchineiLamparielloScutariTSP14-PartII} for specific examples.
For instance, in networking applications where there is a cluster-head,
the update of the multipliers can be performed at the cluster, and,
then, it can be broadcasted to the users. In fully decentralized networks instead,
the update of $\boldsymbol{{\lambda}}$ can be done by the users themselves,
by running consensus based algorithms to locally estimate $\sum_{i=1}^{I}\tilde{\mathbf{g}}^{i}\left(\widehat{{\mathbf{x}}}_{i}(\boldsymbol{\lambda}^{n};\mathbf{x}^{\nu});\mathbf{x}^{\nu}\right)$.
This, in general, requires a limited signaling exchange among neighboring
nodes only. Note also that the size of the dual problem (the dimension
of $\boldsymbol{\lambda}$) is equal to $m$ (the number of shared
constraints), which makes Algorithm\emph{ }\ref{alg:Dual-Decomposition}
scalable in the number of blocks (users). \end{remark}\vspace{-0.2cm}

\subsection{Primal decomposition methods\label{sub:primal-decomposition}}

Algorithm \ref{alg:Dual-Decomposition} is based on the relaxation
of the shared constraints into the Lagrangian, resulting, in general,
in a violation of these constraints during the intermediate iterates.
In some applications this fact may prevent the on-line implementation
of the algorithm. In this section we propose a distributed scheme
that does not suffer from this issue: we cope with the shared constraints
using a primal decomposition technique.

Introducing the slack variables $\mathbf{t}\triangleq(\mathbf{t}_{i})_{i=1}^{I}$,
with each $\mathbf{t}_{i}\in\mathbb{R}^{m}$, \ref{eq:k2_dec}can
be rewritten as
\begin{equation}
\begin{array}{cl}
\underset{(\mathbf{x}_{i},\mathbf{t}_{i})_{i=1}^{I}}{\textrm{min}} & \sum_{i=1}^{I}\tilde{U}_{i}(\bx_{i};\bx^{\nu}),\\
\textrm{s.t.} & \mathbf{x}_{i}\in\mathcal{K}_{i},\quad\forall i=1,\ldots,I,\smallskip\\
 & \tilde{\mathbf{g}}^{i}\left(\mathbf{x}_{i};\mathbf{x}^{\nu}\right)\leq\mathbf{t}_{i},\quad\forall i=1,\ldots,I,\smallskip\\
 & \sum_{i=1}^{I}\mathbf{t}_{i}\leq\mathbf{0}.
\end{array}\label{eq:primal_reformulation}
\end{equation}
When $\mathbf{t}=(\boldsymbol{t}_{i})_{i=1}^{I}$ is fixed, (\ref{eq:primal_reformulation})
can be decoupled across the users: for each $i=1,\ldots,I$, solve
\begin{equation}
\begin{array}{cl}
\underset{\mathbf{x}_{i}}{\textrm{min}} & \tilde{U}_{i}(\bx_{i};\bx^{\nu}),\\
\mbox{s.t.} & \mathbf{x}_{i}\in\mathcal{K}_{i},\\
 & \tilde{\mathbf{g}}^{i}\left(\mathbf{x}_{i};\mathbf{x}^{\nu}\right)\overset{\boldsymbol{{\mu}}_{i}(\mathbf{t}_{i};\mathbf{x}^{\nu})}{\leq}\mathbf{t}_{i},
\end{array}\label{eq:primal_single_user}
\end{equation}
where $\boldsymbol{{\mu}}_{i}(\mathbf{t}_{i};\mathbf{x}^{\nu})$ is
the optimal Lagrange multiplier associated with the inequality constraint
$\tilde{\mathbf{g}}^{i}\left(\mathbf{x}_{i};\mathbf{x}^{\nu}\right)\leq\mathbf{t}_{i}$.
\textcolor{black}{Note that the existence of $\boldsymbol{{\mu}}_{i}(\mathbf{t}_{i};\mathbf{x}^{\nu})$
is guaranteed if (\ref{eq:primal_single_user}) has zero-duality gap
}\cite[Prop. 6.5.8]{BertNedOzd03}\textcolor{black}{{} (e.g., when some
CQ hold), but $\boldsymbol{{\mu}}_{i}(\mathbf{t}_{i};\mathbf{x}^{\nu})$
may not be unique.}\textcolor{red}{{} }Let us denote by $\mathbf{x}_{i}^{\star}(\mathbf{t}_{i};\mathbf{x}^{\nu})$
the unique solution of (\ref{eq:primal_single_user}), given $\mathbf{t}=$$(\boldsymbol{t}_{i})_{i=1}^{I}$.
The optimal partition $\mathbf{t}^{\star}\triangleq(\mathbf{t}_{i}^{\star})_{i=1}^{I}$
of the shared constraints can be found solving the so-called \emph{master}
(convex) problem (see, e.g., \cite{Palomar-Chiang_ACTran07-Num}):
\begin{equation}
\begin{array}{cl}
\underset{\boldsymbol{t}}{\textrm{min}} & P(\boldsymbol{t};\mathbf{x}^{\nu})\triangleq\sum_{i=1}^{I}\tilde{U}_{i}(\mathbf{x}_{i}^{\star}(\mathbf{t}_{i};\mathbf{x}^{\nu});\mathbf{x}^{\nu})\\
\textrm{s.t.} & \sum_{i=1}^{I}\mathbf{t}_{i}\leq\mathbf{0}.
\end{array}\label{eq:primal_master}
\end{equation}
Due to the non-uniqueness of $\boldsymbol{{\mu}}_{i}(\mathbf{t}_{i};\mathbf{x}^{\nu})$,
the objective function in (\ref{eq:primal_master}) is nondifferentiable;
problem (\ref{eq:primal_master}) can be solved by subgradient methods.
The partial subgradient of $P(\boldsymbol{t};\mathbf{x}^{\nu})$ with respect to the first argument evaluated at $(\mathbf{t};\bx^\nu)$
is \textcolor{black}{{}
\[
\partial_{\mathbf{t}_{i}}P(\mathbf{t};\mathbf{x}^{\nu})=-\boldsymbol{{\mu}}_{i}(\mathbf{t}_{i};\mathbf{x}^{\nu}),\quad i=1,\ldots,I.
\]
} We refer to \cite[Prop. 8.2.6]{BertsekasNedicOzdaglar_book_convex03}
for standard convergence results for subgradient projection algorithms.

\section{Conclusions\label{sec:Conclusions}}

In this Part I of the two-part paper, we proposed a novel general
algorithmic framework based on convex approximation  techniques for the solution of
nonconvex smooth optimization problems: we point out that the nonconvexity may occur both in the objective function and in the constraints. Some key
novel features of our scheme are: i) it maintains feasibility \emph{and}
leads to \emph{parallel and distributed} solution methods for a very
general class of nonconvex problems; ii) it offers a lot of flexibility
in choosing the approximation functions, enlarging significantly
the class of problems that can be solved with provable convergence;
iii) by choosing different approximation functions, different (distributed)
schemes can be obtained: they are all convergent, but differ for (practical) convergence
speed, complexity, communication overhead, and a priori knowledge
of the system parameters; iv) it includes as special cases several
classical SCA-based algorithms and improves on their convergence properties;
and v) it provides new efficient algorithms also for old problems.
In Part II \cite{FacchineiLamparielloScutariTSP14-PartII} we customize
the developed algorithmic framework to a variety of new (and old)
multi-agent optimization problems in signal processing, communications
and networking, providing a solid evidence of its good performance.
Quite interestingly, even when compared with existing schemes that have been designed
for very specific problems, our algorithms are shown to outperform
them.\vspace{-0.3cm}

\appendix{}

We first introduce some intermediate technical results that are instrumental
to prove Theorem \ref{th:conver}. The proof of Theorem \ref{th:conver}
is given in Appendix \ref{sub:Proof_main_theo}.\vspace{-0.3cm}

\subsection{Intermediate results\label{sub:Technical-preliminaries}}

We first prove Lemma \ref{th:uspropfs-1}-Lemma
\ref{th:basicarg}, providing some key properties of the sequence $\{{\bold x}^{\nu}\}$
generated by Algorithm 1 and of the best-response function $\hat{\bold x}(\bullet)$
defined in (\ref{eq:best-response}). Finally, with Theorem \ref{th:conver-1} we establish some technical conditions
under which a(t least one) regular limit point of the sequence generated
by Algorithm 1 is a stationary solution of the original nonconvex
problem \ref{eq:nonnonU}; the proof of Theorem \ref{th:conver}
will rely on such conditions. We recall that, for the sake of simplicity, throughout
this section we tacitly assume that Assumptions 1-3 and 5 are satisfied. \smallskip

\noindent \noindent \textbf{Lemma \ref{th:uspropfs-1}}. The first
lemma shows, among other things, that Algorithm
1 produces a sequence of points that are feasible for the original problem \ref{eq:nonnonU}.

\begin{lemma}\label{th:uspropfs-1}
The following properties hold.

\noindent(i) ${\by}\in{\cal {\cal X}}(\by)\subseteq{\cal {\cal X}}$
for all ${\by}\in{\cal {\cal X}}$;

\noindent(ii) $\hat{\bold x}({\by})\in{\cal {\cal X}}(\by)\subseteq{\cal {\cal X}}$
for all ${\by}\in{\cal {\cal X}}$.

Moreover, the sequence $\{\mathbf{x}^{\nu}\}$ generated by Algorithm 1 is such that:

\noindent(iii) ${\bold x}^{\nu}\in{\cal {\cal X}}$;

\noindent(iv) ${\bold x}^{\nu+1}\in{\cal {\cal X}}(\bold{x}^{\nu})\cap{\cal {\cal X}}(\bold{x}^{\nu+1})$.
\end{lemma}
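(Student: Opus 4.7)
The strategy is to chain together the three key assumptions on the approximants: C2 (tight fit at the base point), C3 (upper bound property of the $\tilde g_j$'s), and C1 (convexity of $\tilde g_j(\cdot;\by)$), together with the convex structure of $\mathcal{K}$. Items (i) and (ii) are obtained by direct verification, while (iii) and (iv) follow from a simple induction on $\nu$ that uses the convexity of the inner feasible set $\mathcal{X}(\bx^\nu)$.

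\textbf{For (i).} Given $\by\in\mathcal{X}$, to show $\by\in\mathcal{X}(\by)$ I only need to verify that $\by\in\mathcal{K}$ (true by hypothesis) and $\tilde g_j(\by;\by)\le 0$ for every $j$. But C2 gives $\tilde g_j(\by;\by)=g_j(\by)\le 0$ since $\by\in\mathcal{X}$. For the inclusion $\mathcal{X}(\by)\subseteq\mathcal{X}$, pick any $\bx\in\mathcal{X}(\by)$; then $\bx\in\mathcal{K}$ and by C3 one has $g_j(\bx)\le\tilde g_j(\bx;\by)\le 0$ for all $j$, hence $\bx\in\mathcal{X}$.

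\textbf{For (ii).} By B1 and C1 the subproblem $(\mathcal{P}_{\by})$ is strongly convex, so $\hat{\bx}(\by)=\argmin_{\bx\in\mathcal{X}(\by)}\tilde U(\bx;\by)$ is well-defined and trivially belongs to $\mathcal{X}(\by)$; the subset relation from (i) immediately yields $\hat{\bx}(\by)\in\mathcal{X}$.

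\textbf{For (iii) and (iv).} I would argue by induction on $\nu$. The base case $\bx^0\in\mathcal{X}$ is part of the data of Algorithm 1. For the inductive step, assume $\bx^\nu\in\mathcal{X}$. By (i), $\bx^\nu\in\mathcal{X}(\bx^\nu)$, and by (ii), $\hat{\bx}(\bx^\nu)\in\mathcal{X}(\bx^\nu)$. Now the set $\mathcal{X}(\bx^\nu)$ is convex: $\mathcal{K}$ is convex by A1 and each map $\tilde g_j(\cdot;\bx^\nu)$ is convex by C1, so its sublevel set at $0$ is convex, and the intersection inherits convexity. Since $\bx^{\nu+1}=(1-\gamma^\nu)\bx^\nu+\gamma^\nu\hat{\bx}(\bx^\nu)$ with $\gamma^\nu\in(0,1]$, we conclude $\bx^{\nu+1}\in\mathcal{X}(\bx^\nu)$, which together with the inclusion in (i) gives $\bx^{\nu+1}\in\mathcal{X}$. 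This establishes both (iii) and the first half of (iv). The second half of (iv), namely $\bx^{\nu+1}\in\mathcal{X}(\bx^{\nu+1})$, is then just (i) applied at $\by=\bx^{\nu+1}\in\mathcal{X}$.

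\textbf{Main obstacle.} There is essentially none: the whole lemma is a bookkeeping exercise that exploits precisely the design of Assumption 3 (and the convex combination in step S.3). The only small subtlety to flag is the need to invoke convexity of $\mathcal{X}(\bx^\nu)$ when taking the convex combination in step S.3, which is why both C1 and A1 enter in the induction step.
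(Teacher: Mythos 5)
Your proposal is correct and follows essentially the same route as the paper: C2 and C3 for (i), feasibility of the minimizer for (ii), and an induction using the convexity of $\mathcal{X}(\bx^{\nu})$ together with the convex-combination update in Step S.3 for (iii)--(iv). The paper phrases the second half of (iv) as a direct verification via C2, but that is exactly your "(i) applied at $\by=\bx^{\nu+1}$", so there is no substantive difference.
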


\begin{proof} (i) the first implication ${\by}\in{\mathcal {X}}(\by)$
follows from $\tilde{g}_{j}(\by;\by)=g_{j}(\by)\le0$,
for all $j=1,\ldots,m$ {[}due to C2{]}. For the inclusion ${\cal {\cal X}}(\by)\subseteq{\cal {\cal X}}$,
it suffices to recall that, by C3, we have $g_{j}({\bold x})\le\tilde{g}_{j}({\bold x};\by)$
for all ${\bold x}\in{\cal K}$, $\by\in{\mathcal{X}}$,
and $j=1,\ldots,m$, implying that, if ${\bx}\in{\mathcal{X}}(\by)$,
then ${\bx}\in{\cal {\cal X}}$.

\noindent(ii) $\hat{\bold x}(\by)\in{\cal {\cal X}}(\by)$
since it is the optimal solution of $\mathcal P_\by$ (and thus also
feasible).

\noindent(iii) In view of (i) and (ii), it follows by induction and the fact that ${\bold x}^{\nu+1}$ is a convex combination of ${\bold x}^{\nu}\in{\cal {\cal X}}(\bold{x}^{\nu})$ and $\hat{\bold x}({\bold x}^{\nu})\in{\cal {\cal X}}(\bold{x}^{\nu})$, which is a convex subset of $\mathcal{X}$.

\noindent(iv) By (iii), ${\bold x}^{\nu+1}\in{\cal {\cal X}}(\bold{x}^{\nu})$.
Furthermore, we have $\tilde{g}_{j}({\bold x}^{\nu+1};{\bold x}^{\nu+1})=g_{j}({\bold x}^{\nu+1})\le0$,
for all $j=1,\ldots,m$, where the equality follows from C2 and the
inequality is due to ${\bold x}^{\nu+1}\in{\cal {\cal X}}$; thus, ${\bold x}^{\nu+1}\in{\cal {\cal X}}(\bold{x}^{\nu+1})$.
\end{proof}

\medskip{}
\noindent \textbf{Lemma \ref{th:Lemma_descent}}. With Lemma \ref{th:Lemma_descent}, we establish some key properties of the
best-response function $\hat{\bold x}(\bullet)$. We will use the
following definitions. Given ${\bold y},{\bold z}\in{\cal X}$ and
$\rho>0$, let
\begin{equation}
\mathbf{w}_{\rho}(\hat{\bold x}({\bold y}),{\bold z})\triangleq\hat{\bold x}({\bold y})-\rho\nabla_{\bold x}\tilde{U}(\hat{\bold x}({\bold y});{\bold z});\label{eq:w_rho_def}
\end{equation}
and let $P_{{\cal X}({\bold y})}({\bold u})$ denote the Euclidean
projection of ${\bold u}\in\mathbb{R}^{n}$ onto the closed convex
set ${\cal X}({\bold y})$:
\begin{equation}
P_{{\cal X}({\bold y})}({\bold u})=\underset{\mathbf{x}\in{\cal X}({\bold y})}{\text{{argmin}}}\left\Vert \mathbf{x}-
\mathbf{u}\right\Vert .\label{eq:Eucl_projection}
\end{equation}

\begin{lemma}\label{th:Lemma_descent} The best-response function
$\mathcal{X}\ni\mathbf{y}\mapsto\hat{\bold x}(\mathbf{y})$ satisfies
the following: 

\noindent(i) For every $\mathbf{y}\in\mathcal{X}$, $\hat{\bold x}(\mathbf{y})-\mathbf{y}$
is a descent direction for $U$ at $\mathbf{y}$ such that
\begin{equation}
\nabla U(\mathbf{y})\trt(\hat{\bold x}(\mathbf{y})-\mathbf{y})\le-c_{\tilde{U}}\|\hat{\bold x}(\mathbf{y})-\mathbf{y}\|^{2},\label{eq:suffdescent-1}
\end{equation}
where $c_{\tilde{U}}>0$ is the constant of uniform strong convexity
of $\tilde{{U}}$ (cf. B1);

\noindent(ii) For every $\mathbf{y}\in\mathcal{X}$, it holds that
\begin{equation}
\hat{\bold x}({\bold y})=P_{{\cal X}({\bold y})}\left(\mathbf{w}_{\rho}\left(\hat{\bold x}({\bold y}),{\bold y}\right)\right),\label{eq:optproj}
\end{equation}
for every fixed  $\rho>0$.

\noindent(iii) Suppose that also B5 holds true. Then, $\hat{\bold x}(\bullet)$
is continuous at every $\bar{{\mathbf{x}}}\in\mathcal{X}$ such that
$\hat{\bold x}(\bar{{\mathbf{x}}})\in\mathcal{X}(\bar{{\mathbf{x}}})$
is regular.\end{lemma}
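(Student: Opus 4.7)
Parts~(i) and (ii) are classical variational inequality (VI) manipulations. For~(i), Lemma~\ref{th:uspropfs-1}(i) gives $\by\in\mathcal{X}(\by)$, so substituting $\bx=\by$ into the first--order optimality VI of the strongly convex problem $\mathcal{P}_\by$ yields $\nabla_\bx\tilde U(\hat\bx(\by);\by)^T(\hat\bx(\by)-\by)\le 0$. Combining this with strong convexity (B1) evaluated at $\hat\bx(\by)$ and $\by$, and using B2 to identify $\nabla_\bx\tilde U(\by;\by)$ with $\nabla U(\by)$, yields \eqref{eq:suffdescent-1} after rearrangement. Part~(ii) is the equivalent projection form of the same VI: by the projection theorem, $\hat\bx(\by)=P_{\mathcal{X}(\by)}\bigl(\hat\bx(\by)-\rho\nabla_\bx\tilde U(\hat\bx(\by);\by)\bigr)$ holds for some (equivalently, every) $\rho>0$ if and only if $\nabla_\bx\tilde U(\hat\bx(\by);\by)^T(\bz-\hat\bx(\by))\ge 0$ for all $\bz\in\mathcal{X}(\by)$, which is precisely the optimality condition of $\mathcal{P}_\by$.

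For part~(iii), my plan is to leverage the fixed-point formulation of~(ii). Define $T_\by(\bx)\triangleq P_{\mathcal{X}(\by)}\bigl(\bx-\rho\,\nabla_\bx\tilde U(\bx;\by)\bigr)$, so that $\hat\bx(\by)$ is the unique fixed point of $T_\by$. Using B1 (constant $c_{\tilde U}$) and B5 (constant $\tilde L_{\nabla,1}$) together with non-expansiveness of Euclidean projection, the standard computation
\[
\|T_\by(\bx_1)-T_\by(\bx_2)\|^2\le\bigl(1-2\rho c_{\tilde U}+\rho^2\tilde L_{\nabla,1}^2\bigr)\|\bx_1-\bx_2\|^2
\]
shows that for any $\rho\in(0,\,2c_{\tilde U}/\tilde L_{\nabla,1}^2)$ the map $T_\by$ is a contraction of modulus $q_\rho<1$ \emph{independent} of $\by$. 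For a sequence $\by_n\to\bar\bx$, the triangle inequality applied to $\hat\bx(\by_n)=T_{\by_n}(\hat\bx(\by_n))$ and $\hat\bx(\bar\bx)=T_{\bar\bx}(\hat\bx(\bar\bx))$ gives
\[
(1-q_\rho)\,\|\hat\bx(\by_n)-\hat\bx(\bar\bx)\|\le\|T_{\by_n}(\hat\bx(\bar\bx))-T_{\bar\bx}(\hat\bx(\bar\bx))\|,
\]
which reduces continuity of $\hat\bx(\cdot)$ at $\bar\bx$ to continuity of $\by\mapsto T_\by(\mathbf{u}_0)$ at $\bar\bx$, where $\mathbf{u}_0\triangleq\hat\bx(\bar\bx)-\rho\nabla_\bx\tilde U(\hat\bx(\bar\bx);\bar\bx)$; convergence of this inner vector is immediate from B3.

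The main obstacle is then continuity of the parametric projection $\by\mapsto P_{\mathcal{X}(\by)}(\mathbf{u}_0)$ at $\bar\bx$. Here the regularity of $\hat\bx(\bar\bx)$ is decisive: since $\mathcal{X}(\bar\bx)$ is convex, MFCQ at $\hat\bx(\bar\bx)$ is equivalent to Slater's CQ on all of $\mathcal{X}(\bar\bx)$ (as already noted after Assumption~5), so there exists $\bz_0\in\mathcal{K}$ with $\tilde g_j(\bz_0;\bar\bx)<0$ for every $j$. Joint continuity of each $\tilde g_j$ (C4) preserves this strict feasibility for $\by$ in a neighborhood of $\bar\bx$, and the convex combination trick $(1-\epsilon_n)\bv+\epsilon_n\bz_0$ with $\epsilon_n\to 0$ suitably chosen produces, for any $\bv\in\mathcal{X}(\bar\bx)$, a feasible approximating sequence in $\mathcal{X}(\by_n)$ converging to $\bv$; this delivers Painlev\'e--Kuratowski continuity of $\by\mapsto\mathcal{X}(\by)$ at $\bar\bx$. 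Standard continuity results for Euclidean projections onto continuously varying convex sets then yield $P_{\mathcal{X}(\by_n)}(\mathbf{u}_0)\to P_{\mathcal{X}(\bar\bx)}(\mathbf{u}_0)$, closing the contraction inequality and proving $\hat\bx(\by_n)\to\hat\bx(\bar\bx)$. The uniform--Slater transfer from $\bar\bx$ to nearby $\by$ is the delicate point, and it is precisely where the regularity hypothesis cannot be dispensed with.
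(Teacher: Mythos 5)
Your proof is correct and follows essentially the same route as the paper: parts (i) and (ii) are the identical variational-inequality manipulations (choose $\bz=\by$ in the optimality VI, add and subtract $\nabla_\bx\tilde U(\by;\by)$, invoke B1--B2; projection characterization for (ii)). For (iii), your contraction/fixed-point argument plus continuity of the parametric projection is exactly the content of the results the paper cites -- Dafermos' Theorem 2.1 for the VI sensitivity step (which is proved by the very contraction estimate you write down, valid for $\rho<2c_{\tilde U}/\tilde L_{\nabla,1}^2$ by B1 and B5) and Rockafellar--Wets for the continuity of $\mathcal{X}(\cdot)$ under regularity, which you re-derive directly via the Slater-point perturbation; so there is no gap, only an unpacking of the citations.
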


\begin{proof} (i) By Assumption 2, for any given $\mathbf{y}\in{\cal {\cal X}}$,
$\hat{\bold x}(\mathbf{y})$ is the solution of the strongly convex
optimization problem \ref{eq:k2}; therefore,
\begin{equation}
({\bold z}-\hat{\bold x}(\mathbf{y}))\trt\nabla_{\bold x}\tilde{U}\left(\hat{\bold x}(\mathbf{y});\mathbf{y}\right)\ge0,\quad\forall{\bold z}\in{\cal X}(\mathbf{y}).\label{eq:minprin}
\end{equation}
By choosing ${\bold z}=\mathbf{y}$ {[}recall by Lemma \ref{th:uspropfs-1}(i) that $\mathbf{y}\in{\cal {\cal X}}(\mathbf{y})${]},
we get
\[
\begin{array}{ll}
\left(\mathbf{y}-\hat{\bold x}(\mathbf{y})\right)\trt\left(\nabla_{\bold x}\tilde{U}(\hat{\bold x}(\mathbf{y});\mathbf{y})\right.\hspace{-8pt} & -\nabla_{\bold x}\tilde{U}(\mathbf{y};\mathbf{y})\\[5pt]
 & \left.+\nabla_{\bold x}\tilde{U}(\mathbf{y};\mathbf{y})\right)\ge0,
\end{array}
\]
which, using B1 and B2, leads to
\[
({\bold y}-\hat{\bold x}(\mathbf{y}))\trt\nabla_{\bold x}U(\mathbf{y})\ge c_{\tilde{U}}\|\hat{\bold x}(\mathbf{y})-\mathbf{y}\|^{2}.
\]

\noindent(ii) It follows readily from the fixed-point characterization
of the solution $\hat{\bold x}({\bold y})$ of the strongly convex
subproblem $\mathcal P_\by$: see, e.g., \cite[Prop. 1.5.8]{FacchPangBk}.

\noindent(iii) We first observe that, under the assumed regularity of all the points in $\mathcal{X}\!\left(\mathbf{\bar{{\mathbf{x}}}}\right)$,
$\mathcal{X}(\mathbf{\bullet})$ is continuous at $\bar{{\mathbf{x}}}$
\cite[Example 5.10]{RockWets98}. It follows from \cite[Proposition 4.9]{RockWets98}
(see also \cite[Example 5.57]{RockWets98}) that, for every fixed $\mathbf{u}\in\mathbb R^n$,
the map $\mathbf{x}\mapsto P_{{\cal X}({\bold x})}({\bold u})$ is
continuous at $\mathbf{x}=\bar{\mathbf{x}}$. This, together with B1,
B3 and B5 is sufficient for $\hat{\bold x}(\bullet)$ to be continuous
at $\bar{{\mathbf{x}}}$ \cite[Theorem 2.1]{Daf88}.\end{proof}\medskip

\noindent \noindent \textbf{Lemma \ref{th:uspropofupgpre}}. Under the extra
conditions B4-B5, with the following lemma, which is reminiscent of similar results in \cite{Daf88} and \cite{yen95}, we can establish a suitable sensitivity property
of the best-response function $\hat{\bold x}(\bullet)$;
Lemma \ref{th:uspropofupgpre} will play a key role in the proof of
statement (c) of Theorem \ref{th:conver}.

\begin{lemma}\label{th:uspropofupgpre} Suppose 
that B4-B5 hold and there exist $\bar{\rho}>0$ and $\beta>0$
such that
\begin{equation}
\|P_{{\cal X}({\bold y})}(\mathbf{w}_{\rho}(\hat{\bold x}({\bold z}),{\bold z}))-P_{{\cal X}({\bold z})}(\mathbf{w}_{\rho}(\hat{\bold x}({\bold z}),{\bold z}))\|\le\beta\|{\bold y}-{\bold z}\|^{\frac{1}{2}},\label{eq:projpro}
\end{equation}
for all $\rho\in(0,\bar{\rho}]$ and ${\bold y},\,{\bold z}\,\in{\cal X}$.
Then there exists $\tilde{\rho}\in(0,\bar{\rho}]$ such that
\begin{equation}
\|\hat{\bold x}({\bold y})-\hat{\bold x}({\bold z})\|\le\eta_{\rho}\|{\bold y}-{\bold z}\|+\theta_{\rho}\|{\bold y}-{\bold z}\|^{\frac{1}{2}},\label{eq:hol}
\end{equation}
for all ${\bold y},{\bold z}\in{\cal X}$ and $\rho\in(0,\tilde{\rho}]$,
with
\begin{equation}
\begin{array}{lll}
\eta_{\rho} & \triangleq & \dfrac{\rho\,\tilde{L}_{\nabla,2}}{1-\sqrt{1+\rho^{2}\tilde{L}_{\nabla,1}^{2}-2\rho c_{\tilde{U}}}}\\
\theta_{\rho} & \triangleq & \dfrac{{\beta}}{1-\sqrt{1+\rho^{2}\tilde{L}_{\nabla,1}^{2}-2\rho c_{\tilde{U}}}},
\end{array}\label{eq:def_eta_theta}
\end{equation}
 where $\tilde{L}_{\nabla,1}$ and $\tilde{L}_{\nabla,2}$ are the
Lipschitz constants of $\nabla_{\bold x}\tilde{U}(\bullet;\mathbf{y})$
and $\nabla_{\bold x}\tilde{U}(\mathbf{x};\bullet)$, respectively
(cf. B4 and B5); $\tilde{L}_{\nabla,1}$ is assumed to be such that $\tilde{L}_{\nabla,1}\geq c_{\tilde{U}}$
without loss of generality.\end{lemma}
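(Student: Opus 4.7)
The plan is to exploit the fixed-point characterization of $\hat{\bx}(\cdot)$ given by Lemma \ref{th:Lemma_descent}(ii) in order to reduce the estimate (\ref{eq:hol}) to (i) a projection term that is controlled by the hypothesis (\ref{eq:projpro}), and (ii) a term handled by the nonexpansiveness of the Euclidean projection combined with B1, B4, and B5. Concretely, for any $\rho>0$, write
$\hat{\bx}(\by)=P_{\mathcal{X}(\by)}(\mathbf{w}_{\rho}(\hat{\bx}(\by),\by))$ and $\hat{\bx}(\bz)=P_{\mathcal{X}(\bz)}(\mathbf{w}_{\rho}(\hat{\bx}(\bz),\bz))$, insert the auxiliary point $P_{\mathcal{X}(\by)}(\mathbf{w}_{\rho}(\hat{\bx}(\bz),\bz))$, and apply the triangle inequality. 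The ``$\mathcal{X}(\by)\!\to\!\mathcal{X}(\bz)$'' discrepancy between projections is dominated by $\beta\|\by-\bz\|^{1/2}$ via (\ref{eq:projpro}); the remaining term is bounded by $\|\mathbf{w}_{\rho}(\hat{\bx}(\by),\by)-\mathbf{w}_{\rho}(\hat{\bx}(\bz),\bz)\|$ thanks to the nonexpansiveness of $P_{\mathcal{X}(\by)}(\cdot)$.

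Next I would split the $\mathbf{w}_\rho$ difference using the telescoping
\[
\nabla_{\bx}\tilde U(\hat{\bx}(\by);\by)-\nabla_{\bx}\tilde U(\hat{\bx}(\bz);\bz)
=\bigl[\nabla_{\bx}\tilde U(\hat{\bx}(\by);\by)-\nabla_{\bx}\tilde U(\hat{\bx}(\bz);\by)\bigr]
+\bigl[\nabla_{\bx}\tilde U(\hat{\bx}(\bz);\by)-\nabla_{\bx}\tilde U(\hat{\bx}(\bz);\bz)\bigr].
\]
The second bracket is bounded by $\tilde L_{\nabla,2}\|\by-\bz\|$ using B4. For the piece that remains, I would estimate $\|(\hat{\bx}(\by)-\hat{\bx}(\bz))-\rho[\nabla_{\bx}\tilde U(\hat{\bx}(\by);\by)-\nabla_{\bx}\tilde U(\hat{\bx}(\bz);\by)]\|^{2}$ by expanding the square and invoking B1 (strong monotonicity with constant $c_{\tilde U}$) on the cross term and B5 (Lipschitz constant $\tilde L_{\nabla,1}$) on the quadratic term; this yields the contraction factor $\sqrt{1+\rho^{2}\tilde L_{\nabla,1}^{2}-2\rho c_{\tilde U}}$ multiplying $\|\hat{\bx}(\by)-\hat{\bx}(\bz)\|$.

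Collecting everything gives
\[
\|\hat{\bx}(\by)-\hat{\bx}(\bz)\|
\;\le\;
\sqrt{1+\rho^{2}\tilde L_{\nabla,1}^{2}-2\rho c_{\tilde U}}\;\|\hat{\bx}(\by)-\hat{\bx}(\bz)\|
+\rho\tilde L_{\nabla,2}\|\by-\bz\|+\beta\|\by-\bz\|^{1/2}.
\]
Choosing $\tilde\rho\in(0,\bar\rho]$ with $\tilde\rho<2c_{\tilde U}/\tilde L_{\nabla,1}^{2}$ (which is admissible because $\tilde L_{\nabla,1}\ge c_{\tilde U}$ implies $2c_{\tilde U}/\tilde L_{\nabla,1}^{2}\le 2/\tilde L_{\nabla,1}$) guarantees that the coefficient $1-\sqrt{1+\rho^{2}\tilde L_{\nabla,1}^{2}-2\rho c_{\tilde U}}$ is strictly positive for every $\rho\in(0,\tilde\rho]$; dividing through produces exactly the bound (\ref{eq:hol}) with $\eta_\rho$ and $\theta_\rho$ as in (\ref{eq:def_eta_theta}).

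The main obstacle, conceptually, is the separation of the two sources of variation of $\hat{\bx}(\cdot)$: the change in the \emph{feasible set} $\mathcal{X}(\cdot)$ and the change in the \emph{mapping} $\nabla_{\bx}\tilde U(\cdot\,;\cdot)$. The hypothesis (\ref{eq:projpro}) is what allows the former to be absorbed into the $\|\by-\bz\|^{1/2}$ term; the latter is then standard variational-inequality perturbation analysis \`a la Dafermos/Yen, provided one is careful to telescope the gradient difference through the auxiliary point $\nabla_{\bx}\tilde U(\hat{\bx}(\bz);\by)$ so that B1 and B5 can both be applied on a common second argument. The remaining arithmetic (expanding the square, verifying the admissible range of $\rho$) is routine.
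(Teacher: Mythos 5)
Your proposal is correct and follows essentially the same route as the paper's proof: fixed-point characterization of $\hat{\bx}(\cdot)$ via \eqref{eq:optproj}, nonexpansiveness of the projection, the contraction factor $\sqrt{1+\rho^{2}\tilde L_{\nabla,1}^{2}-2\rho c_{\tilde U}}$ from B1 and B5, the $\rho\tilde L_{\nabla,2}\|\by-\bz\|$ term from B4, and \eqref{eq:projpro} for the set-perturbation term, ending with the same division by $1-\sqrt{1+\rho^{2}\tilde L_{\nabla,1}^{2}-2\rho c_{\tilde U}}$ for $\rho\le\min\{2c_{\tilde U}/\tilde L_{\nabla,1}^{2},\bar\rho\}$. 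The only (immaterial) difference is the order in which the auxiliary projection points are inserted: the paper first telescopes through $P_{\mathcal{X}(\by)}(\mathbf{w}_{\rho}(\hat{\bx}(\bz),\by))$ and then splits the residual, whereas you telescope the gradient difference inside a single nonexpansiveness bound; both yield the identical final inequality.
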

\begin{IEEEproof}
Using \eqref{eq:optproj} we have, for every $\rho>0$,
\begin{equation}
\begin{array}{l}
\|\hat{\bold x}({\bold y})-\hat{\bold x}({\bold z})\|\smallskip\\
\quad=\|P_{{\cal X}({\bold y})}(\mathbf{w}_{\rho}(\hat{\bold x}({\bold y}),{\bold y}))-P_{{\cal X}({\bold z})}(\mathbf{w}_{\rho}(\hat{\bold x}({\bold z}),{\bold z}))\|\smallskip\\
\quad\leq\|P_{{\cal X}({\bold y})}(\mathbf{w}_{\rho}(\hat{\bold x}({\bold y}),{\bold y}))-P_{{\cal X}({\bold y})}(\mathbf{w}_{\rho}(\hat{\bold x}({\bold z}),{\bold y}))\|\smallskip\\
\quad\quad+\|P_{{\cal X}({\bold y})}(\mathbf{w}_{\rho}(\hat{\bold x}({\bold z}),{\bold y}))-P_{{\cal X}({\bold z})}(\mathbf{w}_{\rho}(\hat{\bold x}({\bold z}),{\bold z})\|.
\end{array}\label{eq:ineq}
\end{equation}
We bound next the two terms on the RHS of (\ref{eq:ineq}).

For every $\rho>0$, it holds
\begin{equation}
\!\!\!\!\begin{array}{ll}
 & \!\!\!\!\!\!\|P_{{\cal X}({\bold y})}(\mathbf{w}_{\rho}(\hat{\bold x}({\bold y}),{\bold y}))-P_{{\cal X}({\bold y})}(\mathbf{w}_{\rho}(\hat{\bold x}({\bold z}),{\bold y}))\|^{2}\\[5pt]
 & \quad\overset{(a)}{\leq}\|\mathbf{w}_{\rho}(\hat{\bold x}({\bold y}),{\bold y})-\mathbf{w}_{\rho}(\hat{\bold x}({\bold z}),{\bold y})\|^{2}\\[5pt]
 & \quad=\,\|\hat{\bold x}({\bold y})-\hat{\bold x}({\bold z})\|^{2}\medskip\\
 & \quad\quad\,+\rho^{2}\|\nabla_{\bold x}\tilde{U}(\hat{\bold x}({\bold z});{\bold y})-\nabla_{\bold x}\tilde{U}(\hat{\bold x}({\bold y});{\bold y})\|^{2}\medskip\\
 & \quad\hspace{5pt}\hspace{5pt}\,-2\rho\,\left(\hat{\bold x}({\bold y})-\hat{\bold x}({\bold z})\right)\trt\left(\nabla_{\bold x}\tilde{U}(\hat{\bold x}({\bold y});{\bold y})-\nabla_{\bold x}\tilde{U}(\hat{\bold x}({\bold z});{\bold y})\right)\\[5pt]
 & \quad\overset{(b)}{\leq}\|\hat{\bold x}({\bold y})-\hat{\bold x}({\bold z})\|^{2}+\rho^{2}\tilde{L}_{\nabla,1}^{2}\|\hat{\bold x}({\bold y})-\hat{\bold x}({\bold z})\|^{2}\\[5pt]
 & \quad\hspace{5pt}\hspace{5pt}\,-2\,\rho\, c_{\tilde{U}}\,\|\hat{\bold x}({\bold y})-\hat{\bold x}({\bold z})\|^{2}\\[5pt]
 & \quad=\,(1+\rho^{2}\tilde{L}_{\nabla,1}^{2}-2\,\rho c_{\tilde{U}})\,\|\hat{\bold x}({\bold y})-\hat{\bold x}({\bold z})\|^{2},\\[5pt]
\end{array}\label{eq:preresstron}
\end{equation}
where (a) is due to the non-expansive property of the projection
operator $P_{{\cal X}({\bold y})}(\bullet)$ and (b) follows from
B1 and B5. Note that  $1+\rho^{2}\tilde{L}_{\nabla,1}^{2}-2\rho c_{\tilde{U}}>0$ since we assumed $\tilde{L}_{\nabla,1}\geq c_{\tilde{U}}$.

Let us bound now the second term on the RHS of (\ref{eq:ineq}). For
every $\rho\in(0,\bar{\rho}]$, we have \vspace{-0.2cm}

\begin{equation}
\begin{array}{l}
\!\!\!\!\!\!\!\|P_{{\cal X}({\bold y})}(\mathbf{w}_{\rho}(\hat{\bold x}({\bold z}),{\bold y}))-P_{{\cal X}({\bold z})}(\mathbf{w}_{\rho}(\hat{\bold x}({\bold z}),{\bold z}))\|\medskip\\[5pt]
\quad\quad\le\|P_{{\cal X}({\bold y})}(\mathbf{w}_{\rho}(\hat{\bold x}({\bold z}),{\bold y}))-P_{{\cal X}({\bold y})}(\mathbf{w}_{\rho}(\hat{\bold x}({\bold z}),{\bold z}))\|\\[5pt]
\quad\quad\hspace{5pt}\hspace{5pt}+\|P_{{\cal X}({\bold y})}(\mathbf{w}_{\rho}(\hat{\bold x}({\bold z}),{\bold z}))-P_{{\cal X}({\bold z})}(\mathbf{w}_{\rho}(\hat{\bold x}({\bold z}),{\bold z}))\|\\[5pt]
\quad\quad\overset{(a)}{\leq}\|\mathbf{w}_{\rho}(\hat{\bold x}({\bold z}),{\bold y})-\mathbf{w}_{\rho}(\hat{\bold x}({\bold z}),{\bold z})\|+\beta\|{\bold y}-{\bold z}\|^{\frac{1}{2}}\\[5pt]
\quad\quad\overset{(b)}{\leq}\rho\,\tilde{L}_{\nabla,2}\,\|{\bold y}-{\bold z}\|+\beta\|{\bold y}-{\bold z}\|^{\frac{1}{2}},
\end{array}\label{eq:intpass2}
\end{equation}
where (a) is due to the non-expansive property of the projection
$P_{{\cal X}({\bold y})}(\bullet)$ and \eqref{eq:projpro}, and (b)
follows from B4.

Combining \eqref{eq:ineq}, \eqref{eq:preresstron} and \eqref{eq:intpass2}
we obtain the desired result \eqref{eq:hol} with $\tilde{\rho}=\min\{2c_{\tilde{U}}/\tilde{L}_{\nabla,1}^{2},\bar{\rho}\}$
(so that $0< 1+\rho^{2}\tilde{L}_{\nabla,1}^{2}-2\rho c_{\tilde{U}}<1$
for every $\rho\in(0,\tilde{\rho}]$).\medskip
\end{IEEEproof}
\noindent \noindent \textbf{Lemmas \ref{th:preyen} and \ref{th:basicarg}}.
While Assumptions 1-3 and B4-B5 in Lemma \ref{th:uspropofupgpre}
are quite standard, condition \eqref{eq:projpro} is less trivial
and not easy to be checked. The following Lemma \ref{th:basicarg} provides
some easier to be checked sufficient conditions that imply \eqref{eq:projpro}.
To prove Lemma \ref{th:basicarg} we need first Lemma \ref{th:preyen},
as stated next.

\begin{lemma}\label{th:preyen} Consider $\bar \bx \in {\cal X}$. By assuming C7, the following hold:

\noindent(i) If $\tilde{{\mathbf{x}}}\in{\cal X}(\bar{\bold x})$
is regular, then ${\cal X}(\bullet)$
enjoys the Aubin property  at $(\bar{\bold x},\tilde{{\mathbf{x}}})$;%
\footnote{See \cite[Def. 9.36]{RockWets98} for the definition of the Aubin property.
Note also that we use some results from  \cite{yen95}  where a point-to-set map that has the Aubin property
is called pseudo-Lipschitz
\cite[Def. 1.1]{yen95}.
}

\noindent(ii) If in addition ${\cal X}$
is compact, then a neighborhood $\mathcal{V}_{\bar{\bold x}}$ of
$\bar{\bold x}$, a neighborhood $\mathcal{W}_{\tilde{{\mathbf{x}}}}$
of $\tilde{{\mathbf{x}}}$, and a constant $\hat{\beta}>0$ exist
such that
\begin{equation}
\|P_{{\cal X}({\bold y})}({\bold u})-P_{{\cal X}({\bold z})}({\bold u})\|\le\hat{\beta}\|{\bold y}-{\bold z}\|^{\frac{1}{2}}\label{eq:pseudlip}
\end{equation}
for all ${\bold y},{\bold z}\in{\cal X}\cap\mathcal{V}_{\bar{\bold x}}$,
and ${\bold u}\in\mathcal{W}_{\tilde{{\mathbf{x}}}}$. \end{lemma}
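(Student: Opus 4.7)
The plan is to leverage classical sensitivity results from parametric variational analysis, relying especially on the framework of Yen~\cite{yen95} which is cited in the footnote and on the calculus of set-valued maps in Rockafellar--Wets.

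For part (i), I would view the subproblem feasible set as a parametric constraint system
\[
\mathcal{X}(\mathbf{y})=\{\mathbf{x}\in\mathcal{K}\,:\,\tilde g_j(\mathbf{x};\mathbf{y})\le 0,\; j=1,\ldots,m\}.
\]
By C1 each $\tilde g_j(\bullet;\mathbf{y})$ is convex, and by C7 the defining functions are jointly Lipschitz on $\mathcal{K}\times\mathcal{X}$. Regularity of $\tilde{\mathbf{x}}\in\mathcal{X}(\bar{\mathbf{x}})$ means that MFCQ holds at $\tilde{\mathbf{x}}$ for the system at $\mathbf{y}=\bar{\mathbf{x}}$. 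In the convex-in-$\mathbf{x}$ setting this is equivalent to Robinson's CQ, which is well known to imply metric regularity (equivalently, the Aubin/pseudo-Lipschitz property of the inverse image map) for the associated constraint system; this is essentially Robinson's theorem, recorded for instance in \cite[Theorem~9.43]{RockWets98}. Since $\tilde g_j$ is (jointly) Lipschitz by C7, and since the parameter $\mathbf{y}$ enters only through $\tilde g_j$, the standard Aubin-property result for parametric inequality systems yields the desired property of $\mathcal{X}(\bullet)$ at $(\bar{\mathbf{x}},\tilde{\mathbf{x}})$.

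For part (ii), once the Aubin property is in hand on some neighborhood, I would apply the Hölder projection estimate of Yen~\cite[Thm.~2.1]{yen95}, which is precisely tailored to parametric constraint systems of this type: if the feasible-set map has the Aubin property around $(\bar{\mathbf{x}},\tilde{\mathbf{x}})$ and the defining inequalities are Lipschitz, then the metric projection is locally $\tfrac12$-Hölder in the parameter, uniformly for $\mathbf{u}$ in a neighborhood of $\tilde{\mathbf{x}}$. Compactness of $\mathcal{X}$ plays the role of supplying a uniform Lipschitz constant for $\tilde g_j(\bullet;\bullet)$ on the relevant sets, so that the local Aubin modulus and the Hölder constant $\hat\beta$ can be fixed independently of $\mathbf{y},\mathbf{z}$ within $\mathcal{X}\cap\mathcal{V}_{\bar{\mathbf{x}}}$.

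The main obstacle I anticipate is the bookkeeping that localizes these classical results to the precise data of the paper. In particular, one needs to check that MFCQ is preserved under small perturbations of the reference point $\bar{\mathbf{x}}$ (this follows from its openness: MFCQ is stable because the gradients $\nabla_{\mathbf x}\tilde g_j(\bullet;\bullet)$ are continuous by C6, so the Aubin property actually holds on a whole neighborhood); and one has to choose $\mathcal{V}_{\bar{\mathbf{x}}}$ and $\mathcal{W}_{\tilde{\mathbf{x}}}$ small enough so that the local estimates of \cite{yen95} apply simultaneously from both $\mathbf{y}$ and $\mathbf{z}$, by a triangle-type argument that replaces $P_{\mathcal{X}(\mathbf{y})}(\mathbf{u})$ by a nearby point in $\mathcal{X}(\mathbf{z})$ obtained from the Aubin inclusion. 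The exponent $\tfrac12$ (rather than $1$) is the intrinsic price paid when the defining inequalities are only Lipschitz and possibly active, as is standard in this line of work.
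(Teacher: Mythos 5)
Your part (i) follows the paper's own route: regularity (MFCQ) of $\tilde{\bx}$ together with the Lipschitz continuity of the data from C7 yields the Aubin property of the parametric constraint system via Robinson's theorem (the paper cites \cite[Theorem 3.2]{Rock85}; your appeal to the Rockafellar--Wets version of the same result is an acceptable substitute). The extra worry you raise about MFCQ being preserved under perturbation of the reference point $\bar{\bx}$ is not needed: the Aubin property at the single pair $(\bar{\bx},\tilde{\bx})$ is by definition a statement about neighborhoods of both arguments, which is all that part (ii) requires.

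Part (ii) has a concrete gap. The projection estimate in \cite{yen95} that the paper invokes (\cite[Lemma 1.1]{yen95}) is a \emph{localized} statement: it bounds $\|P_{\mathcal{X}(\by)\cap\mathcal{B}_{\tilde{\bx}}}(\mathbf{u})-P_{\mathcal{X}(\bz)\cap\mathcal{B}_{\tilde{\bx}}}(\mathbf{u})\|$, where $\mathcal{B}_{\tilde{\bx}}$ is a closed convex neighborhood of $\tilde{\bx}$, and not the projections onto the full sets $\mathcal{X}(\by)$, $\mathcal{X}(\bz)$ appearing in \eqref{eq:pseudlip}. Your sketch never removes this localization, and it assigns compactness of $\mathcal{X}$ the wrong job: C7 already provides a Lipschitz constant on all of $\mathcal{K}\times\mathcal{X}$, so compactness is not needed for uniformity of constants. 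In the paper, compactness is used precisely to delocalize: since every $\mathcal{X}(\by)$ is contained in the compact set $\mathcal{X}$, one may choose $\mathcal{B}_{\tilde{\bx}}$ large enough that $\mathcal{X}(\by)\subset\mathcal{B}_{\tilde{\bx}}$ for all relevant $\by$, whence $P_{\mathcal{X}(\by)\cap\mathcal{B}_{\tilde{\bx}}}=P_{\mathcal{X}(\by)}$ and Yen's local estimate becomes the claimed one. Without this step (or an alternative argument showing that for $\mathbf{u}\in\mathcal{W}_{\tilde{\bx}}$ and $\by$ near $\bar{\bx}$ the true projection already lands inside $\mathcal{B}_{\tilde{\bx}}$), \eqref{eq:pseudlip} does not follow. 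Also, the ``triangle-type argument'' you propose, replacing $P_{\mathcal{X}(\by)}(\mathbf{u})$ by a nearby point of $\mathcal{X}(\bz)$ supplied by the Aubin inclusion, only controls the distance from $P_{\mathcal{X}(\by)}(\mathbf{u})$ to the set $\mathcal{X}(\bz)$, not the distance between the two projections themselves (which is where the H\"older exponent $\tfrac12$ actually arises), so it cannot substitute for Yen's lemma.
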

\begin{IEEEproof}
(i) Under Assumptions 1-3 and C7, the statement follows readily from
 \cite[Theorem 3.2]{Rock85} in
view of the regularity of $\tilde{{\mathbf{x}}}$.

\noindent(ii) Since ${\cal X}(\bullet)$ has the Aubin property  at $(\bar{\bold x},\tilde{{\mathbf{x}}})$,
there exist a neighborhood $\mathcal{V}_{\bar{\bold x}}$ of $\bar{\bold x}$,
a neighborhood $\mathcal{W}_{\tilde{{\mathbf{x}}}}$ of $\tilde{{\mathbf{x}}}$,
and a constant $\hat{\beta}>0$ such that \cite[ Lemma 1.1]{yen95}:
\begin{equation}
\begin{array}{l}
\|P_{{\cal X}({\bold y})\cap\mathcal{B}_{\tilde{{\mathbf{x}}}}}({\bold u})-P_{{\cal X}({\bold z})\cap\mathcal{B}_{\tilde{{\mathbf{x}}}}}({\bold u})\|\le\hat{\beta}\|{\bold y}-{\bold z}\|^{\frac{1}{2}}\\[5pt]\end{array},\label{eq:pseudlippr}
\end{equation}
for all ${\bold y},{\bold z}\in{\cal X}\cap\mathcal{V}_{\bar{\bold x}}$,
and ${\bold u}\in\mathcal{W}_{\tilde{{\mathbf{x}}}}$, where $\mathcal{B}_{\tilde{{\mathbf{x}}}}$
denotes a closed convex neighborhood of $\tilde{{\mathbf{x}}}$. Since
${\cal X}$ is compact, one can always choose $\mathcal{B}_{\tilde{{\mathbf{x}}}}$
such that ${\cal X}(\bar{\bold x})\subset\mathcal{B}_{\tilde{{\mathbf{x}}}}$ for every $\bar \bx \in \mathcal{X}$ and, thus,
\[
\|P_{{\cal X}({\bold y})\cap\mathcal{B}_{\tilde{{\mathbf{x}}}}}({\bold u})-P_{{\cal X}({\bold z})\cap\mathcal{B}_{\tilde{{\mathbf{x}}}}}({\bold u})\|=\|P_{{\cal X}({\bold y})}({\bold u})-P_{{\cal X}({\bold z})}({\bold u})\|,
\]
which proves the desired result.\medskip
\end{IEEEproof}
\noindent We can now derive sufficient conditions for \eqref{eq:projpro}
to hold. \vspace{-0.2cm}

\noindent \begin{lemma}\label{th:basicarg} Suppose that C7 holds
true, $\mathcal{X}$ is compact and $\hat{\bold x}(\bar{\bold x})\in{\cal X}(\bar{\bold x})$
is regular for every $\bar{\bold x}\in{\cal X}$. Then, property
\eqref{eq:projpro} holds. \end{lemma}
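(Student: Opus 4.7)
The plan is to lift the local Hölder estimate from Lemma \ref{th:preyen}(ii), which holds pointwise in $(\bar{\bx}, \tilde{\bx})$ with $\tilde{\bx} \in \mathcal{X}(\bar{\bx})$ regular, to a uniform bound over the compact set $\mathcal{X}$ via a finite covering argument, and to dispose of the far-field regime (where $\|\by-\bz\|$ is bounded below) by a trivial diameter estimate.

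First, for each fixed $\bar{\bx} \in \mathcal{X}$, I would invoke Lemma \ref{th:preyen}(ii) at the pair $(\bar{\bx}, \hat{\bx}(\bar{\bx}))$: by hypothesis $\hat{\bx}(\bar{\bx}) \in \mathcal{X}(\bar{\bx})$ is regular, so we obtain open neighborhoods $\mathcal{V}_{\bar{\bx}}$ of $\bar{\bx}$ and $\mathcal{W}_{\bar{\bx}}$ of $\hat{\bx}(\bar{\bx})$, together with a constant $\hat{\beta}_{\bar{\bx}} > 0$, such that the estimate \eqref{eq:pseudlip} holds on these neighborhoods. To translate this into a bound whose $\mathbf{u}$-argument is of the form $\mathbf{w}_{\rho}(\hat{\bx}(\bz),\bz)$, I would use that $\hat{\bx}(\bullet)$ is continuous at $\bar{\bx}$ (by Lemma \ref{th:Lemma_descent}(iii), invoking B5 as a standing assumption in this appendix) and that $\nabla_{\bx}\tilde{U}$ is jointly continuous on $\mathcal{K}\times\mathcal{X}$; thus the map $(\bz,\rho) \mapsto \mathbf{w}_{\rho}(\hat{\bx}(\bz),\bz)$ is continuous at $(\bar{\bx},0)$ with limiting value $\hat{\bx}(\bar{\bx}) \in \mathcal{W}_{\bar{\bx}}$. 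Shrinking $\mathcal{V}_{\bar{\bx}}$ and choosing $\bar{\rho}_{\bar{\bx}}>0$ small enough then guarantees $\mathbf{w}_{\rho}(\hat{\bx}(\bz),\bz) \in \mathcal{W}_{\bar{\bx}}$ for all $\bz \in \mathcal{V}_{\bar{\bx}}\cap\mathcal{X}$ and $\rho \in (0,\bar{\rho}_{\bar{\bx}}]$.

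Since the shrunken collection $\{\mathcal{V}_{\bar{\bx}}\}_{\bar{\bx}\in\mathcal{X}}$ still covers the compact set $\mathcal{X}$, I would extract a finite subcover $\mathcal{V}_{\bar{\bx}^{(1)}},\ldots,\mathcal{V}_{\bar{\bx}^{(N)}}$ and set $\bar{\rho} := \min_i \bar{\rho}_{\bar{\bx}^{(i)}}$ and $\hat{\beta} := \max_i \hat{\beta}_{\bar{\bx}^{(i)}}$. Let $\delta>0$ be a Lebesgue number of this open cover on $\mathcal{X}$: for every $\bz \in \mathcal{X}$, the set $B(\bz,\delta)\cap\mathcal{X}$ sits inside some single $\mathcal{V}_{\bar{\bx}^{(i)}}$. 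Hence for $\by,\bz\in\mathcal{X}$ with $\|\by-\bz\|<\delta$ and $\rho\in(0,\bar{\rho}]$, both $\by$ and $\bz$ lie in a common $\mathcal{V}_{\bar{\bx}^{(i)}}$, while $\mathbf{u} := \mathbf{w}_{\rho}(\hat{\bx}(\bz),\bz) \in \mathcal{W}_{\bar{\bx}^{(i)}}$ by the previous paragraph; then \eqref{eq:pseudlip} delivers \eqref{eq:projpro} on this regime with constant $\hat{\beta}$.

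For the complementary regime $\|\by-\bz\| \geq \delta$, both projections land in $\mathcal{X}$, which is bounded by compactness, so
\[
\|P_{\mathcal{X}(\by)}(\mathbf{u}) - P_{\mathcal{X}(\bz)}(\mathbf{u})\| \;\leq\; \mathrm{diam}(\mathcal{X}) \;\leq\; \frac{\mathrm{diam}(\mathcal{X})}{\sqrt{\delta}}\,\|\by-\bz\|^{1/2}.
\]
Taking $\beta := \max\{\hat{\beta},\,\mathrm{diam}(\mathcal{X})/\sqrt{\delta}\}$ then yields \eqref{eq:projpro} uniformly for $\by,\bz\in\mathcal{X}$ and $\rho\in(0,\bar{\rho}]$. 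I expect the main obstacle to be the first stage: jointly localizing in the three arguments $\bz$, $\mathbf{u}$, and $\rho$, and arranging matters so a single finite cover of $\mathcal{X}$ controls all three at once. The continuity of $\hat{\bx}$ at regular points is the ingredient that glues the $\bz$-localization to the $\mathbf{u}$-localization and makes the covering/Lebesgue-number bookkeeping succeed.
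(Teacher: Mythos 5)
Your proof is correct, and it reaches the conclusion by a genuinely different route than the paper: you give a direct finite-covering argument (localize Lemma \ref{th:preyen}(ii) around each pair $(\bar{\bx},\hat{\bx}(\bar{\bx}))$, shrink the $\bz$-neighborhoods so that $\mathbf{w}_{\rho}(\hat{\bx}(\bz),\bz)$ stays in the $\mathbf{u}$-neighborhood, extract a finite subcover with a Lebesgue number $\delta$, and dispose of the regime $\|\by-\bz\|\ge\delta$ with the diameter bound $\mathrm{diam}(\mathcal{X})/\sqrt{\delta}$), whereas the paper argues by contradiction: it negates \eqref{eq:projpro} along sequences $\beta^{\nu}\to\infty$, $\bar{\rho}^{\nu}\downarrow 0$, uses the same diameter bound to force $\|\bar{\by}^{\nu}-\bar{\bx}^{\nu}\|\to 0$, and then lands the violating data inside the neighborhoods of Lemma \ref{th:preyen}(ii) to obtain a contradiction. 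The ingredients are identical in both proofs (the local H\"older-$\tfrac12$ estimate from the Aubin property, continuity of $\hat{\bx}(\bullet)$ from Lemma \ref{th:Lemma_descent}(iii), continuity of $\nabla_{\bx}\tilde U$, and compactness of $\mathcal{X}$), and both tacitly need B5 for the continuity of $\hat{\bx}(\bullet)$ even though it is not listed among the lemma's hypotheses --- you are right to flag this as a standing assumption. What your version buys is explicit, constructive uniform constants $\bar{\rho}=\min_i\bar{\rho}_{\bar{\bx}^{(i)}}$ and $\beta=\max\{\hat{\beta},\mathrm{diam}(\mathcal{X})/\sqrt{\delta}\}$ and a cleaner quantifier structure; what the paper's contradiction argument buys is the avoidance of the Lebesgue-number bookkeeping and of the need to jointly localize in $(\bz,\mathbf{u},\rho)$ up front, since the convergent subsequences do that localization automatically.
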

\begin{IEEEproof}
It follows from Lemma \ref{th:preyen}(ii) that, for every $\bar{\bold x}\in{\cal X}$, there exist 
a neighborhood $\mathcal{V}_{\bar{\bold x}}$ of $\bar{\bold x}$,
a neighborhood $\mathcal{W}_{\hat{\bold x}(\bar{\bold x})}$ of $\hat{\bold x}(\bar{\bold x})$,
and a constant $\hat{\beta}>0$  such that:
\begin{equation}
\|P_{{\cal X}({\bold y})}({\bold u})-P_{{\cal X}({\bold z})}({\bold u})\|\le\hat{\beta}\|{\bold y}-{\bold z}\|^{\frac{1}{2}}\label{eq:pseudlipdim}
\end{equation}
for every ${\bold y},{\bold z}\in{\cal X}\cap\mathcal{V}_{\bar{\bold x}},\,{\bold u}\in\mathcal{W}_{\hat{\bold x}(\bar{\bold x})}$.

Suppose now by contradiction that \eqref{eq:projpro} does not hold.
Then, for all $\bar{\rho}^{\nu}>0$ and $\beta^{\nu}>0$ there exist
$\rho^{\nu}\in(0,\bar{\rho}^{\nu}]$, $\bar{\bold x}^{\nu}$, and
${\bar{{\bold y}}}^{\nu}\,\in{\cal X}$ such that:
\begin{equation}
\!\!\!\!\begin{array}{l}
\|P_{{\cal X}({\bar{{\bold y}}}^{\nu})}(\mathbf{w}_{\rho^{\nu}}(\hat{\bold x}\left(\bar{\bold x}^{\nu}\right),\bar{\bold x}^{\nu}))-P_{{\cal X}(\bar{\bold x}^{\nu})}(\mathbf{w}_{\rho^{\nu}}(\hat{\bold x}\left(\bar{\bold x}^{\nu}\right),\bar{\bold x}^{\nu}))\|\\
\hfill>\beta^{\nu}\|{\bar{{\bold y}}}^{\nu}-\bar{\bold x}^{\nu}\|^{\frac{1}{2}}.
\end{array}\label{eq:contr}
\end{equation}
Furthermore, in view of the compactness of ${\cal X}$, denoting by
$D_{{\cal X}}$ the (finite) diameter of ${\cal X}$, the LHS of (\ref{eq:contr})
can be bounded by
\begin{equation}
D_{{\cal X}}\ge\|P_{{\cal X}(\bar{\bold y}^{\nu})}(\mathbf{w}_{\rho^{\nu}}(\hat{\bold x}\left(\bar{\bold x}^{\nu}\right),\bar{\bold x}^{\nu}))-P_{{\cal X}(\bar{\bold x}^{\nu})}(\mathbf{w}_{\mathbf{\rho^{\nu}}}(\hat{\bold x}\left(\bar{\bold x}^{\nu}\right),\bar{\bold x}^{\nu}))\|.\label{eq:contrpass}
\end{equation}

Suppose without loss of generality that $\beta^{\nu}\to+\infty$,
$\bar{\rho}^{\nu}\downarrow0$, and $\bar{\bold x}^{\nu}\underset{{\cal N}}{\to}\bar{\bold x}\in{\cal X}(\bar{\bold x})\subseteq{\cal X}$
and ${\bar{{\bold y}}}^{\nu}\underset{{\cal N}}{\to}\bar{\bold y}\in{\cal X}(\bar{\bold y})\subseteq{\cal X}$,
possibly on a suitable subsequence ${\cal N}$ {[}recall that $\bar{\bold x}^{\nu}\in{\cal X}(\bar{\bold x}^{\nu})$
and ${\bar{{\bold y}}}^{\nu}\in{\cal X}({\bar{{\bold y}}}^{\nu})${]}.
From \eqref{eq:contr} and \eqref{eq:contrpass}, we obtain
\[
D_{{\cal X}}\ge\limsup_{\nu\to+\infty}\beta^{\nu}\|\bar{\bold y}^{\nu}-\bar{\bold x}^{\nu}\|^{\frac{1}{2}},
\]
which, in turn, considering that $\beta^{\nu}\to\infty$ and $\|\bar{\bold y}^{\nu}-\bar{\bold x}^{\nu}\|^{\frac{1}{2}}\ge0$,
implies
\begin{equation}
\lim_{\nu\to+\infty}\|\bar{\bold y}^{\nu}-\bar{\bold x}^{\nu}\|^{\frac{1}{2}}=0.\label{eq:equal_limit_point}
\end{equation}
Then, it must be $\bar{\bold x}=\bar{\bold y}$.

Invoking now the continuity of $\hat{\bold x}(\bullet)$ at $\bar{{\mathbf{x}}}$
{[}cf. Lemma \ref{th:Lemma_descent}(iii){]} and $\nabla_{\bold x}\tilde{U}(\bullet;\bullet)$
on $\mathcal{K}\times\mathcal{X}$ {[}cf. B3{]}, we have
\begin{equation}
\mathbf{w}_{\rho^{\nu}}\!\left(\hat{\bold x}\left(\bar{\bold x}^{\nu}\right),\bar{\bold x}^{\nu}\right)=\hat{\bold x}\left(\bar{\bold x}^{\nu}\right)-\rho^{\nu}\nabla_{\bold x}\tilde{U}(\hat{\bold x}\left(\bar{\bold x}^{\nu}\right),\bar{\bold x}^{\nu})\underset{{\cal N}}{\to}\hat{\bold x}(\bar{\bold x}).\label{eq:w_rho_convergence}
\end{equation}

Therefore, for every $\hat{\beta}>0$ and neighborhoods $\mathcal{V}_{\bar{{\mathbf{x}}}}$
and $\mathcal{W}_{\hat{\bold x}(\bar{\bold z})}$, there exists a
sufficiently large $\nu$ such that \eqref{eq:contr} holds with $\beta^{\nu}>\hat{\beta}$
(recall that $\beta^{\nu}\rightarrow+\infty$), $\bar{{\mathbf{x}}}^{\nu},\,\bar{{\mathbf{y}}}^{\nu}\in\mathcal{V}_{\bar{{\mathbf{x}}}} \cap \mathcal{X}$
{[}due to (\ref{eq:equal_limit_point}){]}, and $\mathbf{w}_{\rho^{\nu}}\!\left(\hat{\bold x}\left(\bar{\bold x}^{\nu}\right),\bar{\bold x}^{\nu}\right)\in\mathcal{W}_{\hat{\bold x}(\bar{\bold z})}$
{[}due to (\ref{eq:w_rho_convergence}){]}; this is in contradiction
with \eqref{eq:pseudlipdim}.
\end{IEEEproof}
\smallskip{}
We recall that the assumption on the regularity of $\hat \bx(\bar \bx) \in {\mathcal{X}(\bar \bx)}$ for every $\bar \bx \in \mathcal{X}$, as required in Lemma \ref{th:basicarg}, is  implied by Assumption 5.
\medskip{}

\noindent \textbf{Theorem \ref{th:conver-1}. }The last theorem of
this section provides technical conditions under which a(t least one)
regular limit point of the sequence generated by Algorithm 1 is a
stationary solution of the original nonconvex problem \ref{eq:nonnonU}.

\begin{theorem}\label{th:conver-1} Let $\{{\bold x}^{\nu}\}$ be
the sequence generated by Algorithm 1 under Assumptions 1-3 and 5. The
the following hold. 

\noindent(a) Suppose
\begin{equation}
\underset{_{\nu\rightarrow\infty}}{\text{{liminf}}}\,\|\hat{\bold x}({\bold x}^{\nu})-{\bold x}^{\nu}\|=0.\label{eq:Th_preliminary_subsequence_convergence}
\end{equation}
 Then, at least one regular limit point of $\{{\bold x}^{\nu}\}$
is a stationary solution of \ref{eq:nonnonU}.

\noindent(b) Suppose
\begin{equation}
\lim_{\nu\rightarrow\infty}\|\hat{\bold x}({\bold x}^{\nu})-{\bold x}^{\nu}\|=0.\label{eq:Th_preliminary_sequence_convergence}
\end{equation}
 Then, every regular limit point of $\{{\bold x}^{\nu}\}$ is a stationary
solution of \ref{eq:nonnonU}.\end{theorem}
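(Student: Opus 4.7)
The plan is to establish stationarity of any candidate limit point $\bar{\bx}$ by passing to the limit in the KKT conditions that characterize the best-response $\hat{\bx}(\bx^\nu)$ as the unique optimal solution of the strongly convex subproblem $\mathcal{P}_{\bx^\nu}$, and then invoking the gradient-consistency conditions B2 and C5 (together with C2 for complementary slackness) to convert the KKT system of $\mathcal{P}_{\bar{\bx}}$ into the KKT system of the original problem $(\mathcal{P})$. Since Assumption~5 makes every feasible point of $(\mathcal{P})$ regular, and, by the equivalence discussed after Definition~1, also every feasible point of any $\mathcal{P}_{\bx^\nu}$, we can freely invoke MFCQ/Slater throughout.

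I would start by extracting a subsequence $\mathcal{N}$ along which $\bx^\nu\to\bar{\bx}$ (a regular limit point of $\{\bx^\nu\}$, which lies in $\mathcal{X}$ by Lemma~\ref{th:uspropfs-1}(iii) and the closedness of $\mathcal{X}$) and, simultaneously, $\|\hat{\bx}(\bx^\nu)-\bx^\nu\|\to 0$: in case (a), the $\liminf$ hypothesis combined with the boundedness of a convergent subsequence of $\{\bx^\nu\}$ yields such an $\mathcal{N}$; in case (b) the condition $\lim\|\hat{\bx}(\bx^\nu)-\bx^\nu\|=0$ guarantees that $\mathcal{N}$ can be chosen along any convergent subsequence of $\{\bx^\nu\}$. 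In either case, $\hat{\bx}(\bx^\nu)\to\bar{\bx}$ along $\mathcal{N}$. By Assumption~5, $\bar{\bx}$ is regular. The KKT conditions for $\hat{\bx}(\bx^\nu)$ then provide nonnegative multipliers $\mu_j^\nu$ with
\[
\mathbf{0}\in \nabla_{\bx}\tilde{U}(\hat{\bx}(\bx^\nu);\bx^\nu)+\sum_{j=1}^{m}\mu_j^\nu\,\nabla_{\bx}\tilde{g}_j(\hat{\bx}(\bx^\nu);\bx^\nu)+N_{\mathcal{K}}(\hat{\bx}(\bx^\nu)),
\]
and $\mu_j^\nu\,\tilde{g}_j(\hat{\bx}(\bx^\nu);\bx^\nu)=0$ for every $j$.

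The key technical step (and the main obstacle) is proving that the multiplier sequence $\{\mu^\nu\}_{\nu\in\mathcal{N}}$ is bounded; once this is done, the rest is a straightforward limit argument. I would proceed by contradiction: assume $\|\mu^\nu\|\to\infty$ on a further subsequence and normalize, setting $\hat{\mu}^\nu\triangleq\mu^\nu/\|\mu^\nu\|$. After extracting one more subsequence, $\hat{\mu}^\nu\to\hat{\mu}$ with $\|\hat{\mu}\|=1$ and $\hat{\mu}_j\ge 0$. Dividing the displayed inclusion above by $\|\mu^\nu\|$ and letting $\nu\to\infty$ (using B3 and C6 for continuity of the gradients, the closedness of the graph of $N_{\mathcal{K}}$, and the fact that the first term $\nabla_{\bx}\tilde{U}/\|\mu^\nu\|$ vanishes) gives $\mathbf{0}\in\sum_{j}\hat{\mu}_j\,\nabla_{\bx}\tilde{g}_j(\bar{\bx};\bar{\bx})+N_{\mathcal{K}}(\bar{\bx})$. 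Complementary slackness combined with continuity (C4) forces $\hat{\mu}_j=0$ whenever $j\notin\bar{J}\triangleq\{j:g_j(\bar{\bx})=0\}$, and by C5 the surrogate gradients are replaced by $\nabla_{\bx}g_j(\bar{\bx})$. The resulting relation is precisely the premise of MFCQ \eqref{eq:MFCQ-2} at $\bar{\bx}$, which forces $\hat{\mu}=\mathbf{0}$, contradicting $\|\hat{\mu}\|=1$.

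With boundedness established, I extract $\mu^\nu\to\bar{\mu}\ge\mathbf{0}$ on a subsequence and pass to the limit in the KKT system for $\mathcal{P}_{\bx^\nu}$. By B2, $\nabla_{\bx}\tilde{U}(\bar{\bx};\bar{\bx})=\nabla_{\bx}U(\bar{\bx})$; by C5, $\nabla_{\bx}\tilde{g}_j(\bar{\bx};\bar{\bx})=\nabla_{\bx}g_j(\bar{\bx})$; by C2, $\bar{\mu}_j\,g_j(\bar{\bx})=0$; and the limit inclusion survives by closedness of the graph of $N_{\mathcal{K}}$. Hence
\[
\mathbf{0}\in\nabla_{\bx}U(\bar{\bx})+\sum_{j=1}^{m}\bar{\mu}_j\,\nabla_{\bx}g_j(\bar{\bx})+N_{\mathcal{K}}(\bar{\bx}),\qquad \bar{\mu}_j\ge 0,\quad \bar{\mu}_j\,g_j(\bar{\bx})=0,
\]
showing that $\bar{\bx}$ is stationary for $(\mathcal{P})$, which proves (a). For (b), the construction of $\mathcal{N}$ works for \emph{every} convergent subsequence of $\{\bx^\nu\}$ (since $\lim\|\hat{\bx}(\bx^\nu)-\bx^\nu\|=0$), so the same argument yields stationarity of \emph{every} regular limit point.
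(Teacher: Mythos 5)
Your proof is correct and follows essentially the same route as the paper: KKT multipliers for the strongly convex subproblems at $\hat{\bx}(\bx^\nu)$, boundedness of the multiplier sequence via normalization and a contradiction with the MFCQ at $\bar{\bx}$, and a limit passage using B2, C2, C5 and the outer semicontinuity of $N_{\mathcal{K}}(\bullet)$. The only divergence is how existence of the subproblem multipliers is secured: you invoke Assumption 5 together with the Slater-type equivalence noted after Definition 1 to get regularity of every feasible point of every $\mathcal{P}_{\bx^\nu}$ at once, whereas the paper establishes the MFCQ at $\hat{\bx}(\bx^\nu)$ for large $\nu$ by a separate limiting argument that uses only the regularity of the limit point $\bar{\bx}$ --- a more self-contained route that would survive relaxing Assumption 5 to regularity at $\bar{\bx}$ alone, consistent with the theorem's phrasing in terms of \emph{regular} limit points.
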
\begin{proof} We prove
only (a); (b) follows applying the result in (a) to every convergent
subsequence of $\{{\bold x}^{\nu}\}$.

Let $\bar{\bold x}$ be a regular accumulation point of the subsequence
$\{{\bold x}^{\nu}\}_{{\cal N}}$ of $\{{\bold x}^{\nu}\}$ satisfying
\eqref{eq:Th_preliminary_subsequence_convergence}; thus, there exists
${\cal N'}\subseteq{\cal N}$ such that $\lim_{{\cal N}^{'}\ni\nu\rightarrow\infty}{\bold x}^{\nu}=\bar{\bold x}$.
We show next that $\bar{\bold x}$ is a KKT point of the original
problem. Let $\bar{J}$ and $J^{\nu}$ be the following sets:
\[
\bar{J}\triangleq\{j\in[1,\ldots,m]:\, g_{j}(\bar{\bold x})=0\},
\]
\[
J^{\nu}\triangleq\{j\in[1,\ldots,m]:\,\tilde{g}_{j}(\hat{\bold x}({\bold x}^{\nu});{\bold x}^{\nu})=0\}
\]
with $\nu\in{\cal N'}$. Using $\lim_{{\cal N}^{'}\ni\nu\rightarrow\infty}\|\hat{\bold x}({\bold x}^{\nu})-{\bold x}^{\nu}\|=0$
{[}cf. \eqref{eq:Th_preliminary_subsequence_convergence}{]} along
with the continuity of $\tilde{g}_{j}$, by C2, we have
\begin{equation}
\lim_{{\cal N}^{'}\ni\nu\rightarrow\infty}\tilde{g}_{j}(\hat{\bold x}({\bold x}^{\nu});{\bold x}^{\nu})=\tilde{g}_{j}(\bar{\bold x};\bar{\bold x})=g_{j}(\bar{\bold x}),\; j=1,\ldots,m.\label{eq:convlim}
\end{equation}

The limit above implies that there exists a positive integer $\tilde{\nu}\in{\cal N'}$
such that
\begin{equation}
J^{\nu}\subseteq\bar{J},\;\forall\nu\ge\tilde{\nu}\,\,\mbox{and}\,\,\nu\in{\cal N'}.\label{eq:actsets}
\end{equation}
Since the functions $\nabla_{\bold x}\tilde U$ and $\nabla_{\bold x}\tilde{g}_{j}$
are continuous, we get, by B2,
\begin{equation}
\lim_{{\cal N}^{'}\ni\nu\rightarrow\infty}\nabla_{\bold x}\tilde{U}(\hat{\bold x}({\bold x}^{\nu});{\bold x}^{\nu})=\nabla_{\bold x}\tilde{U}(\bar{\bold x};\bar{\bold x})=\nabla U(\bar{\bold x}),\label{eq:gradoflim}
\end{equation}
and, for $j=1,\ldots,m$, by C5,
\begin{equation}
\lim_{{\cal N}^{'}\ni\nu\rightarrow\infty}\nabla_{\bold x}\tilde{g}_{j}(\hat{\bold x}({\bold x}^{\nu});{\bold x}^{\nu})=\nabla_{\bold x}\tilde{g}_{j}(\bar{\bold x};\bar{\bold x})=\nabla g_{j}(\bar{\bold x}).\label{eq:gradconlim}
\end{equation}
We claim now that for sufficiently large $\nu\in{\cal N'}$, the MFCQ
holds at $\hat{\bold x}({\bold x}^{\nu})\in{\cal X}({\bold x}^{\nu})$.
Assume by contradiction that the following implication does not hold
for infinitely many $\nu\in{\cal N'}$:
\begin{equation}
\begin{array}{c}
-\sum_{j\in J^{\nu}}\mu_{j}^{\nu}\nabla_{\bold x}\tilde{g}_{j}(\hat{\bold x}({\bold x}^{\nu});{\bold x}^{\nu})\in N_{{\cal K}}(\hat{\bold x}({\bold x}^{\nu}))\\[5pt]
\underbrace{\mbox{}\hspace{70pt}\mu_{j}^{\nu}\ge0,\;\forall j\,\in\, J^{\nu},\hspace{70pt}}\\[-2pt]
\Downarrow\\
\mu_{j}^{\nu}=0,\;\forall j\in\, J^{\nu}.
\end{array}\label{eq:MFCQk}
\end{equation}
It follows that a nonempty index set $\bar{\bar{J}}\subseteq\bar{J}$
exists such that, after a suitable renumeration, for every $\nu\in{\cal N'}$,
we must have
\begin{equation}
\begin{array}{c}
-\sum_{j\in\bar{\bar{J}}}\mu_{j}^{\nu}\nabla_{\bold x}\tilde{g}_{j}(\hat{\bold x}({\bold x}^{\nu});{\bold x}^{\nu})\in N_{{\cal K}}\left(\hat{\bold x}({\bold x}^{\nu})\right)\\[5pt]
\mu_{j}^{\nu}\ge0,\,\forall j\in\bar{\bar{J}}\\[5pt]
\sum_{j\in\bar{\bar{J}}}\mu_{j}^{\nu}=1.
\end{array}\label{eq:limmfcq}
\end{equation}
\textcolor{black}{We may assume without loss of generality that, for each $j\in\bar{\bar{J}}$,
the sequence $\{\mu_{j}^{\nu}\}$ converges to a limit $\bar{\mu}_{j}$
such that $\sum_{j\in\bar{\bar{J}}}\bar{\mu}_{j}=1.$}\textcolor{red}{{}
}In view of the inclusion $\bar{\bar{J}}\subseteq\bar{J}$, by taking
the limit ${\cal N}^{'}\ni\nu\rightarrow\infty$ in \eqref{eq:limmfcq},
and invoking  \eqref{eq:gradconlim} along with the outer semicontinuity
of the mapping $N_{{\cal K}}(\bullet)$ \cite[Prop. 6.6]{RockWets98},
we get
\begin{equation}
\begin{array}{c}
-\sum_{j\in\bar{\bar{J}}}\bar{\mu}_{j}\nabla_{\bold x}g_{j}(\bar{\bold x})\in N_{{\cal K}}\left(\bar{\bold x}\right)\\[5pt]
\bar{\mu}_{j}\ge0,\,\forall j\in\bar{\bar{J}}\\[5pt]
\sum_{j\in\bar{\bar{J}}}\bar{\mu}_{j}=1,
\end{array}\label{eq:limmfcq-limit}
\end{equation}
in contradiction with the regularity of $\bar{\bold x}$ {[}the
MFCQ holds at $\bar{\bold x}$, see (\ref{eq:MFCQ-2}){]}. Therefore,
(\ref{eq:MFCQk}) must hold for sufficiently large $\nu\in{\cal N'}$,
implying that the KKT system of problem  \ref{eq:k2} has a solution
for every sufficiently large $\nu\in{\cal N'}$: thus, there exist
$(\mu_{j}^{\nu})_{j=1}^{m}$ such that \vspace{-0.2cm}

\begin{equation}
\hspace{-0.4cm}\begin{array}{c}
-\left[\nabla_{\bold x}\tilde{U}(\hat{\bold x}({\bold x}^{\nu});{\bold x}^{\nu})+\sum_{j=1}^{m}\mu_{j}^{\nu}\nabla_{\bold x}\tilde{g}_{j}(\hat{\bold x}({\bold x}^{\nu});{\bold x}^{\nu})\right]\!\in\! N_{{\cal K}}(\hat{\bold x}({\bold x}^{\nu}))\\[5pt]
0\le\mu_{j}^{\nu}\perp\tilde{g}_{j}(\hat{\bold x}({\bold x}^{\nu});{\bold x}^{\nu})\le0,\enspace j=1,\ldots,m.
\end{array}\label{eq:KKTk}
\end{equation}
Note that by (\ref{eq:actsets}) and the complementarity slackness
in (\ref{eq:KKTk}), $\mu_{j}^{\nu}=0$ for all $j\notin\bar{J}$
and large $\nu\in{\cal N'}$. \textcolor{black}{Moreover, the sequence
of nonnegative multipliers $\{\boldsymbol{{\mu}}^{\nu}\triangleq(\mu_{j}^{\nu})_{j\in\bar{J}}\}_{\nu\in{\cal N'}}$
must} be bounded, as shown next. Suppose by contradiction that $\lim_{{\cal N'}\ni\nu\to\infty}\|\boldsymbol{{\mu}}^{\nu}\|=+\infty$
for some $\{\hat{\bold x}({\bold x}^{\nu})\}_{{\cal N'}}$ (possibly
over a subsequence). Dividing both sides of \eqref{eq:KKTk} by $\|\boldsymbol{{\mu}}^{\nu}\|$
and taking the limit ${\cal N}^{'}\ni\nu\rightarrow\infty$, one would
get
\begin{equation}
\begin{array}{c}
-\sum_{j\in\bar{J}}\bar{\bar{\mu}}_{j}\nabla g_{j}(\bar{\bold x})\in N_{{\cal K}}(\bar{\bold x})\\[5pt]
0\le\bar{\bar{\mu}}_{j}\perp g_{j}(\bar{\bold x})\le0,\enspace j\in\bar{J},
\end{array}\label{eq:contradiction_bounded_multipliers}
\end{equation}
for some $\bar{\bar{\bm{\mu}}}\triangleq(\bar{\bar{\mu}}_{j})_{j\in\bar{J}}\neq\mathbf{0}$,
in contradiction with \textcolor{black}{{} (\ref{eq:MFCQ-2}).}

Therefore, \textcolor{black}{$\{\boldsymbol{{\mu}}^{\nu}\triangleq(\mu_{j}^{\nu})_{j\in\bar{J}}\}_{\nu\in{\cal N'}}$}
must have a limit; let us denote by $(\bar{\mu}_{j})_{j\in\bar{J}}$
such a limit (after a suitable renumeration). Taking the limit ${\cal N'}\ni\nu\to\infty$
in \eqref{eq:KKTk}, and using  \eqref{eq:gradoflim} and \eqref{eq:gradconlim} along with the
outer semicontinuity of the mapping $N_{{\cal K}}(\bullet)$, we get
\begin{equation}
\begin{array}{c}
-\left[\nabla U(\bar{\bold x})+\sum_{j\in\bar{J}}\bar{\mu}_{j}\nabla g_{j}(\bar{\bold x})\right]\in N_{{\cal K}}(\bar{\bold x})\\[5pt]
0\le\bar{\mu}_{j}\perp g_{j}(\bar{\bold x})\le0,\quad j\in\bar{J}.
\end{array}\label{eq:KKT_original_problem}
\end{equation}
It follows from \eqref{eq:KKT_original_problem} that $\bar{\bold x}$
is a stationary solution of the original problem \ref{eq:nonnonU}.
\end{proof}

\subsection{Proof of Theorem \ref{th:conver} \label{sub:Proof_main_theo}}

\noindent\textbf{Proof of statement (a).} It follows from Lemma \ref{th:uspropfs-1}.\smallskip{}

\noindent\textbf{Proof of statement (b).} Invoking Theorem \ref{th:conver-1}(b),
it is sufficient to show that
(\ref{eq:Th_preliminary_sequence_convergence}) in Theorem \ref{th:conver-1}
is satisfied.

By the descent lemma \cite[Propo. A.24]{Bertsekas_Book-Parallel-Comp} and Step
3 of Algorithm 1, we get:
\[
\begin{array}{rl}
U({\bold x}^{\nu+1})\le & U({\bold x}^{\nu})+\gamma^{\nu}\nabla U({\bold x}^{\nu})\trt(\hat{\bold x}({\bold x}^{\nu})-{\bold x}^{\nu})\\[5pt]
 & +\frac{(\gamma^{\nu})^{2}L_{\nabla U}}{2}\|\hat{\bold x}({\bold x}^{\nu})-{\bold x}^{\nu}\|^{2}.
\end{array}
\]
Invoking (\ref{eq:suffdescent-1}) in Lemma \ref{th:Lemma_descent}, we obtain
\begin{equation}
U({\bold x}^{\nu+1})\le U({\bold x}^{\nu})-\gamma^{\nu}\left(c_{\tilde{U}}-\frac{\gamma^{\nu}L_{\nabla U}}{2}\right)\|\hat{\bold x}({\bold x}^{\nu})-{\bold x}^{\nu}\|^{2}.\label{eq:desclem}
\end{equation}
Since $0<\inf_{\nu}\gamma^{\nu}\le\sup_{\nu}\gamma^{\nu}\le\gamma^{\max}\le1$
and $2c_{\tilde{U}}>\gamma^{\max}L_{\nabla U}$, we deduce from \eqref{eq:desclem}
that either ${U({\bold x}^{\nu})}\to-\infty$ or  $\{U({\bold x}^{\nu})\}$
converges to a finite value and
\begin{equation}
\lim_{\nu \to \infty}\|\hat{\bold x}({\bold x}^{\nu})-{\bold x}^{\nu}\|=0.\label{eq:distconv}
\end{equation}
By assumption A4, $\{U({\bold x}^{\nu})\}$
is convergent and the
sequence $\{\mathbf{x}^{\nu}\}\subseteq{\cal X}$ {[}Lemma \ref{th:uspropfs-1}(iii){]}
 is bounded. Therefore, (\ref{eq:distconv}) holds true and $\{\mathbf{x}^{\nu}\}$
has a limit point in $\mathcal{X}$. By Theorem \ref{th:conver-1}(b)
and (\ref{eq:distconv}), statement (b) of the theorem
follows readily. Finally, by (\ref{eq:desclem}), $U(\mathbf{x}^{\nu})$ is a decreasing sequence: hence, no limit point of $\{\mathbf{x}^{\nu}\}$ can be a local maximum of
$U$.\smallskip{}

\noindent\textbf{Proof of statement (c).} Invoking Theorem \ref{th:conver-1}(a),
it is sufficient to show that (\ref{eq:Th_preliminary_subsequence_convergence})
in Theorem \ref{th:conver-1} is satisfied. Following the same steps
as in the proof of statement (b), by \eqref{eq:desclem} and $\gamma^{\nu}\,\to\,0$,
for $\nu\ge\bar{\nu}$ sufficiently large, there exists
a positive constant $\zeta$ such that:
\begin{equation}
U({\bold x}^{\nu+1})\le U({\bold x}^{\nu})-\gamma^{\nu}\zeta\|\hat{\bold x}({\bold x}^{\nu})-{\bold x}^{\nu}\|^{2},\label{eq:descbeta}
\end{equation}
which, again, by A4, leads to
\begin{equation}
\lim_{\nu\rightarrow\infty}\sum_{t=\bar{\nu}}^{\nu}\gamma^{t}\|\hat{\bold x}({\bold x}^{t})-{\bold x}^{t}\|^{2}<+\infty.\label{eq:summable_series}
\end{equation}
The desired result (\ref{eq:Th_preliminary_subsequence_convergence})
follows from (\ref{eq:summable_series}) and $\sum_{\nu=0}^{\infty}\gamma^{\nu}=+\infty$.
Similarly to the previous case, by (\ref{eq:summable_series}), eventually $U(\mathbf{x}^{\nu})$ is a decreasing sequence: thus, no limit point of $\{\mathbf{x}^{\nu}\}$ can be a local maximum of
$U$.

Suppose now that Assumption 4 holds. By Theorem \ref{th:conver-1}(b)
it is sufficient to prove that (\ref{eq:Th_preliminary_sequence_convergence})
holds true. For notational simplicity, we set $\Delta\hat{\bold x}({\bold x}^{\nu})\triangleq\hat{\bold x}({\bold x}^{\nu})-{\bold x}^{\nu}$.
We already proved that $\liminf_{\nu}\|\Delta\hat{\bold x}({\bold x}^{\nu})\|=0$;
therefore, (\ref{eq:Th_preliminary_sequence_convergence}) holds if $\limsup_{\nu}\|\Delta\hat{\bold x}({\bold x}^{\nu})\|=0$, as stated next.

First of all, note that, by Assumption 4, Lemma \ref{th:basicarg}
and, by consequence, Lemma \ref{th:uspropofupgpre} hold true; therefore,
there exists  $\tilde{\rho}>0$ such that (cf. Lemma \ref{th:uspropofupgpre})
\begin{equation}
\|\hat{\bold x}({\bold x}^{\nu})-\hat{\bold x}({\bold x}^{t})\|\le\eta_{\rho}\|{\bold x}^{\nu}-{\bold x}^{t}\|+\theta_{\rho}\|{\bold x}^{\nu}-{\bold x}^{t}\|^{\frac{1}{2}},\label{eq:hol_theo}
\end{equation}
for any $\nu,t\geq1$ and $\rho\in(0,\tilde{\rho}]$, with $\eta_{\rho}$
and $\theta_{\rho}$ defined in (\ref{eq:def_eta_theta}) (cf. Lemma
\ref{th:uspropofupgpre}).

Suppose by contradiction that $\limsup_{\nu}\|\Delta\hat{\bold x}({\bold x}^{\nu})\|>0$.
Then, there exists  $\delta>0$ such that $\begin{array}{l}
\|\Delta\hat{\bold x}({\bold x}^{\nu})\|>2\delta+\sqrt{\delta/2}\end{array}$ for infinitely many $\nu$, and also $\|\Delta\hat{\bold x}({\bold x}^{\nu})\|<\delta+\sqrt{\delta/2}$
for infinitely many $\nu$. Thus, there exists an infinite subset
of indices ${\cal N}$ such that, for each $\nu\in{\cal N}$ and
some $i_{\nu}>\nu$, the following hold:
\begin{equation}
\|\Delta\hat{\bold x}({\bold x}^{\nu})\|<\delta+\sqrt{\delta/2},\hspace{6pt}\|\Delta\hat{\bold x}({\bold x}^{i_{\nu}})\|>2\delta+\sqrt{\delta/2}
\label{eq:con1f}
\end{equation}
and, in case $i_{\nu} > \nu+1$,
\begin{equation}
\delta+\sqrt{\delta/2}\le\|\Delta\hat{\bold x}({\bold x}^{j})\|\le2\delta+\sqrt{\delta/2},\hspace{6pt}\nu<j<i_{\nu}.\label{eq:con2}
\end{equation}
Hence, for all $\nu\in{\cal N}$, we can write
\begin{equation}
\begin{array}{lll}
\delta & < & \|\Delta\hat{\bold x}({\bold x}^{i_{\nu}})\|-\|\Delta\hat{\bold x}({\bold x}^{\nu})\|\\[5pt]
 & \leq & \|\hat{\bold x}({\bold x}^{i_{\nu}})-\hat{\bold x}({\bold x}^{\nu})\|+\|{\bold x}^{i_{\nu}}-{\bold x}^{\nu}\|\\[5pt]
 & \overset{(a)}{\le} & (1+\eta_{\rho})\|{\bold x}^{i_{\nu}}-{\bold x}^{\nu}\|+\theta_{\rho}\|{\bold x}^{i_{\nu}}-{\bold x}^{\nu}\|^{\frac{1}{2}}\\[5pt]
 & \overset{(b)}{\le} & (1+\eta_{\rho})\,\left(2\delta+\sqrt{\delta/2}\right)\,\sum_{t=\nu}^{i_{\nu}-1}\gamma^{t}\\[5pt]
 &  & +\theta_{\rho}\left(2\delta+\sqrt{\delta/2}\right)^{\frac{1}{2}}\left(\sum_{t=\nu}^{i_{\nu}-1}\gamma^{t}\right)^{\frac{1}{2}},
\end{array}\label{eq:ineqser}
\end{equation}
where (a) is due to (\ref{eq:hol_theo}) and (b) comes from the
triangle inequality and the updating rule of the algorithm. It follows
from \eqref{eq:con1f} and \eqref{eq:ineqser} that
\begin{equation}
\begin{array}{ll}
\liminf_{\nu}\Big[(1 & \hspace{-8pt}+\eta_{\rho})\left(2\delta+\sqrt{\delta/2}\right)\sum_{t=\nu}^{i_{\nu}-1}\gamma^{t}\\
 & +\hspace{1pt}\theta_{\rho}\left(2\delta+\sqrt{\delta/2}\right)^{\frac{1}{2}}\left(\sum_{t=\nu}^{i_{\nu}-1}\gamma^{t}\right)^{\frac{1}{2}}\Big]>0.
\end{array}\label{eq:absu}
\end{equation}
We now prove that
$\|\Delta\hat{\bold x}({\bold x}^{\nu})\|\ge\delta/2$ for sufficiently
large $\nu\in{\cal N}$. Reasoning as in \eqref{eq:ineqser}, we have
\begin{equation}
\begin{array}{ll}
\|\Delta\hat{\bold x} & \hspace{-10pt}({\bold x}^{\nu+1})\|-\|\Delta\hat{\bold x}({\bold x}^{\nu})\|\le\\[5pt]
 & \hfill(1+\eta_{\rho})\|{\bold x}^{\nu+1}-{\bold x}^{\nu}\|+\theta_{\rho}\|{\bold x}^{\nu+1}-{\bold x}^{\nu}\|^{\frac{1}{2}}\\[5pt]
 & \le(1+\eta_{\rho})\gamma^{\nu}\|\Delta\hat{\bold x}({\bold x}^{\nu})\|+\theta_{\rho}(\gamma^{\nu})^{1/2}\|\Delta\hat{\bold x}({\bold x}^{\nu})\|^{\frac{1}{2}},
\end{array}\label{eq:ineqser2}
\end{equation}
for any given $\nu$. For large $\nu\in{\cal N}$, so that $(1+\eta_{\rho})\gamma^{\nu}\delta/2+\theta_{\rho} (\gamma^\nu \delta/2)^{\frac{1}{2}}<\delta/2+\sqrt{\delta/2}$, suppose by contradiction that $\|\Delta\hat{\bx}({\bx}^{\nu})\|<\delta/2$; this would give $\|\Delta\hat{\bx}({\bx}^{\nu+1})\|<\delta+\sqrt{\delta/2}$ and condition \eqref{eq:con2} (or, in case, \eqref{eq:con1f}) would be violated. Then, it must be
\begin{equation}
\|\Delta\hat{\bold x}({\bold x}^{\nu})\|\ge\delta/2.\label{eq:dbound}
\end{equation}
Using (\ref{eq:dbound}), we can show now that \eqref{eq:absu} is
in contradiction with the convergence of $\{U({\bold x}^{\nu})\}$.
By \eqref{eq:descbeta}, (possibly over a subsequence) for sufficiently large $\nu\in{\cal N}$, we have
\begin{equation}
\begin{array}{rcl}
U({\bold x}^{i_{\nu}}) & \le & U({\bold x}^{\nu})-\zeta\sum_{t=\nu}^{i_{\nu}-1}\gamma^{t}\|\Delta\hat{\bold x}({\bold x}^{t})\|^{2}\\[5pt]
 & < & U({\bold x}^{\nu})-\zeta\frac{\delta^{2}}{4}\sum_{t=\nu}^{i_{\nu}-1}\gamma^{t},
\end{array}\label{eq:fin}
\end{equation}
where, in the last inequality, we have used \eqref{eq:con2} (or, in case, \eqref{eq:con1f}) and \eqref{eq:dbound}.
Thus, since ${U({\bold x}^{\nu})}$ converges, \eqref{eq:fin} implies
$\lim_{\nu\in{\cal N}}\sum_{t=\nu}^{i_{\nu}-1}\gamma^{t}=0$, in contradiction with \eqref{eq:absu}.

\begin{remark}\label{rem:btone}
As we already mentioned in subsection \ref{subsec:gamma},
in \cite{BeckBenTTetr10} it is shown that, in the specific case of a {\em strongly convex} $U$,  $\tilde U =U$, and ${\cal K} = \Re^n$,  one can choose
$\gamma^\nu =1$ at every iteration and prove the stationarity of every limit point of the sequence generated by Algorithm 1
(assuming regularity).  For completeness we sketch how this result can be readily   obtained using our framework (and actually slightly improved on by also considering the case in which ${\cal K}$ is not necessarily $\Re^n$). The proof is based on Theorem
\ref{th:conver-1}(b) and a result in  \cite{BeckBenTTetr10}.
By Theorem \ref{th:conver-1}(b), it is enough to show that
(\ref{eq:Th_preliminary_sequence_convergence}) holds. But (\ref{eq:Th_preliminary_sequence_convergence}) does indeed hold because of
the strong convexity of $U$, as shown  at the beginning of Proposition 3.2 in \cite{BeckBenTTetr10}. Note that the strong convexity of $U$ plays here a fundamental role and that, once we remove this restrictive
assumption, things get considerably more difficult, as clearly shown by the complexity of the proof of Theorem \ref{th:conver}.
\end{remark}\vspace{-0.1cm}

\subsection{Proof of Lemma \ref{prop:suf cond inner conv}\label{sub:Proof-of-Lemma_Lip}}

\noindent (a) It is a consequence of Danskin's theorem \cite[Prop. A.43]{Bertsekas_Book-Parallel-Comp}.\\
(b) The statement follows from the uniform Lipschitz continuity of
$\widehat{{\mathbf{x}}}(\bullet;\mathbf{x}^{\nu})$ on $\mathbb{{R}}_{+}^{m}$
with constant $L_{\nabla d}$, which is proved next. For notational
simplicity, let us write $\widehat{{\mathbf{x}}}_{\boldsymbol{\lambda}}\triangleq\widehat{{\mathbf{x}}}(\boldsymbol{\lambda};\mathbf{x}^{\nu})$
and $\widehat{{\mathbf{x}}}_{\boldsymbol{\lambda}^{'}}\triangleq\widehat{{\mathbf{x}}}(\boldsymbol{\boldsymbol{\lambda}}^{'};\mathbf{x}^{\nu})$.
Defining $\mathcal{L}(\mathbf{x},\boldsymbol{{\lambda}})\triangleq\sum_{i=1}^{I}\left(\tilde{U}_{i}(\bx_{i};\bx^{\nu})+\boldsymbol{\lambda}^{T}\mathbf{\tilde{\mathbf{g}}}^{i}(\bx_i;\bx^{\nu})\right)$,
we have, by the minimum principle,
\vskip-0.3cm 

\[
\begin{array}{lll}
\left(\widehat{{\mathbf{x}}}_{\boldsymbol{\lambda}^{'}}-\widehat{{\mathbf{x}}}_{\boldsymbol{\lambda}}\right)^{T}\nabla_{\mathbf{x}}\mathcal{L}\left(\mathbf{\widehat{{\mathbf{x}}}_{\boldsymbol{\lambda}}},\boldsymbol{{\lambda}}\right) & \geq & 0\\
\left(\widehat{{\mathbf{x}}}_{\boldsymbol{\lambda}}-\widehat{{\mathbf{x}}}_{\boldsymbol{\lambda}^{'}}\right)^{T}\nabla_{\mathbf{x}}\mathcal{L}\left(\mathbf{\widehat{{\mathbf{x}}}_{\boldsymbol{\lambda}^{'}}},\boldsymbol{\lambda}^{'}\right) & \geq & 0.
\end{array}
\]
\vskip-0.1cm 

\noindent
Adding the two inequalities above and summing and subtracting $\nabla_{\mathbf{x}}\mathcal{L}\left(\mathbf{\widehat{{\mathbf{x}}}_{\boldsymbol{\lambda}}},\boldsymbol{{\lambda}}^{'}\right)$,
we obtain
\vskip-0.3cm 

\begin{equation}
\begin{array}{l}
c_{\tilde{U}}\cdot\|\widehat{{\mathbf{x}}}_{\boldsymbol{\lambda}}-\widehat{{\mathbf{x}}}_{\boldsymbol{\lambda}^{'}}\|^{2}\smallskip\\
\quad\leq\left(\widehat{{\mathbf{x}}}_{\boldsymbol{\lambda}^{'}}-\widehat{{\mathbf{x}}}_{\boldsymbol{\lambda}}\right)^{T}\left[\nabla_{\mathbf{x}}\mathcal{L}\left(\mathbf{\widehat{{\mathbf{x}}}_{\boldsymbol{\lambda}}},\boldsymbol{{\lambda}}\right)-\nabla_{\mathbf{x}}\mathcal{L}\left(\mathbf{\widehat{{\mathbf{x}}}_{\boldsymbol{\lambda}}},\boldsymbol{{\lambda}}^{'}\right)\right]\smallskip\\
\quad = \left(\widehat{{\mathbf{x}}}_{\boldsymbol{\lambda}}-\widehat{{\mathbf{x}}}_{\boldsymbol{\lambda}^{'}}\right)^{T}\left[\nabla_{\mathbf{x}}\mathcal{L}\left(\mathbf{\widehat{{\mathbf{x}}}_{\boldsymbol{\lambda}}},\boldsymbol{{\lambda}}^{'}\right)-\nabla_{\mathbf{x}}\mathcal{L}\left(\mathbf{\widehat{{\mathbf{x}}}_{\boldsymbol{\lambda}}},\boldsymbol{{\lambda}}\right)\right],
\end{array}\label{eq:Lip_eq1}
\end{equation}
where, in the first inequality, we used the uniform strong convexity
of $\mathcal{L}\left(\bullet,\boldsymbol{{\lambda}}^{'}\right)$.
Hence, we have
\vskip-0.3cm 

\begin{equation}
\begin{array}{rll}
c_{\tilde{U}}\cdot\|\widehat{{\mathbf{x}}}_{\boldsymbol{\lambda}}-\widehat{{\mathbf{x}}}_{\boldsymbol{\lambda}^{'}}\| & \leq & {\displaystyle {\sum_{j=1}^{m}}}\left|\lambda_{j}^{'}-\lambda_{j}\right|\left\Vert \nabla_{\mathbf{x}}\tilde{{g}}_{j}\left(\mathbf{\widehat{{\mathbf{x}}}_{\boldsymbol{\lambda}}};\mathbf{x^{\nu}}\right)\right\Vert \smallskip\\
 & \stackrel{(a)}{\leq} & {\displaystyle {\sum_{j=1}^{m}}}\left|\lambda_{j}^{'}-\lambda_{j}\right|\, L_{\tilde{{g}}}=\left\Vert \boldsymbol{{\lambda}}^{'}-\mathbf{\boldsymbol{{\lambda}}}\right\Vert _{1}\cdot L_{\tilde{{g}}}\\
 & \leq & L_{\tilde{{g}}}\,\sqrt{{m}}\,\left\Vert \boldsymbol{{\lambda}}^{'}-\mathbf{\boldsymbol{{\lambda}}}\right\Vert _{2},
\end{array}\label{eq:Lip_eq1-1}
\end{equation}
\vskip-0.2cm

\noindent
where (a) follows from the uniform Lipschitz continuity of $\tilde{{\mathbf{g}}}$.
The inequality above proves the Lipschitz property of $\widehat{{\mathbf{x}}}(\bullet;\mathbf{x}^{\nu})$.

\bibliographystyle{IEEEtran}
\bibliography{scutari_refs,Surbib}

\end{document}